\def\build#1_#2^#3{\mathrel{\mathop{\kern0pt#1}\limits_{#2}^{#3}}}%
\def\argmin#1{\build{\rm argmin}_{#1}^{}}
\newtheorem{remark}{Remark}
\newtheorem{assumption}{Assumption}
\newtheorem{theorem}{Theorem}
\newtheorem{lemma}{Lemma}
\newtheorem{proof}{Proof}
\begin{document}

\title{Learning-based predictive control for linear systems:\\a unitary approach} 
\date{}
\author{E. Terzi \and L. Fagiano \and M. Farina \and R. Scattolini \\ \textit{e-mail: name.surname@polimi.it}}

\maketitle

\thanks{The authors are with the \textit{Dipartimento di Elettronica, Informazione e Bioingegneria}, Politecnico di Milano, Milano, Italy}

\begin{abstract}
A comprehensive approach addressing identification and control for learning-based Model Predictive Control (MPC) for linear systems is presented. The design technique yields a data-driven MPC law, based on a dataset collected from the working plant. The method is indirect, i.e. it relies on a model learning phase and a model-based control design one, devised in an integrated manner.\\
\noindent In the model learning phase, a twofold outcome is achieved: first, different optimal $p$-steps ahead prediction models are obtained, to be used in the MPC cost function; secondly, a perturbed state-space model is derived, to be used for robust constraint satisfaction. Resorting to Set Membership techniques, a characterization of the bounded model uncertainties is obtained, which is a key feature for a successful application of the robust control algorithm.\\
\noindent In the control design phase, a robust MPC law is proposed, able to track piece-wise constant reference signals, with guaranteed recursive feasibility and convergence properties. The controller embeds multistep predictors in the cost function, it ensures robust constraints satisfaction thanks to the learnt uncertainty model, and it can deal with possibly unfeasible reference values. The proposed approach is finally tested in a numerical example.
\end{abstract}


\section{Introduction} \label{sec:intro}
The idea of combining identification and control for efficient and reliable control systems design, starting from data collected on the plant, has a long standing history, see the survey paper \cite{gevers1996identification}.
Indirect approaches are characterized by an initial phase aimed at estimating the model of the plant, while a following one concerns the model-based control synthesis. In this framework different solutions have been proposed, as thoroughly discussed in \cite{gevers1996identification}. Specifically, in \emph{dual} algorithms, parameter estimation and control design are posed as a combined problem, in \emph{optimal experiment design} methods, identification procedures suitably tailored for the adopted control synthesis algorithm are developed, while in \emph{robust} algorithms the model is estimated together with uncertainty bounds, to be properly used in the control synthesis. With the cheap availability of large data-sets and the advent of more and more powerful identification and learning techniques,
recent years have seen a renaissance of research activity in this area, and in particular on robust methods. From the learning side, new and powerful Set Membership (SM) identification methods, see \cite{CaFS14,LiCM17,TFSM2014,TFFS18a}, have been developed to identify a model for the system with guaranteed prediction error bounds, suitable for robust control design. From the control side, MPC algorithms, robust with respect to model disturbances, have been studied from several standpoints, and considering different characterizations of the model and its associated uncertainty, see e.g.\cite{ChatEtAl:2011,FARINA201653,Schildi:2014,calafiore2013robust,maiworm2015scenario,PoYa93,KoBM96,MAYNE2005219,GoKM06,BeFS13}. Among the most recent contributions in learning-based control, we recall the dual MPC algorithm described in \cite{heirung2017dual} for systems characterized by probabilistic parametric uncertainty and process noise, and the MPC method developed in \cite{aswani2013provably}
guaranteeing both robustness and performance by considering different models of the system. Another recent contribution is reported in \cite{LiCM17}, where an MPC guaranteeing stability has been developed for nonlinear models estimated with the learning method proposed in \cite{calliess2016lazily}.\\ 
In this paper we present a unitary approach to learning-based robust MPC, where we take a joint perspective on the learning and the control design phases. The system generating the data is linear and time-invariant, with unknown order, subject to process disturbance and measurement noise. In the learning phase, we identify with SM different models, together with their uncertainty bounds. Specifically, we compute from data $p$-steps-ahead independent prediction models, $p \in [0,\bar{p}]$, used to compute the future evolution of the system outputs over the prediction horizon $\bar{p}$ considered in the MPC cost function. The use of different models, as previously suggested in \cite{shook1991identification}, allows one to achieve good prediction accuracy at different steps ahead and to have non-conservative bounds on the process disturbance. In addition to the independent $\bar{p}$ models, we also estimate a perturbed state-space model, together with its disturbance bounds, subsequently used in the MPC design for enforcing state, input and output constraints, as well as the robust stability property according to the well known tube-based approach, see \cite{mayne2005robust}.  The robust MPC controller is designed, following the approach proposed in \cite{limon2008automatica}, for tracking piece-wise constant reference signals, with guaranteed recursive feasibility and convergence properties. A  numerical example is finally reported. Preliminary results on the learning and control synthesis algorithms developed in this paper have been reported in \cite{TFFS18a}, and \cite{TFFS18b}. 
In this paper, we propose a novel offline method to learn the uncertainty model to be used in the control design phase, together with a new MPC design to deal with the tracking of (possibly infeasible) piecewise constant reference signals, we derive the full proofs of all the theoretical results concerning learning and control design, and we merge our preliminary work into a unitary and holistic vision of the interplay between learning and control for MPC.\\
The paper is organized as follows: in Section~\ref{sec:system} the problem is stated and the proposed approach is described. In Section~\ref{sub:identified_multistep} the SM identification algorithm is presented, Section~\ref{sec:control_design} describes the robust MPC control scheme design, followed by a numerical example in Section~\ref{sec:numerical_results} and a concluding discussion in Section~\ref{sec:conclusions}. 
\section*{Notation}
$k$ is the discrete time index and $\mathbb{Z}$ is the set of non negative integers. The transpose of matrix $M$ is $M^T$. 
We denote with $\mathbf{1}_{x}$ a column vector with all its elements equal to one and of dimension $x$, and with $0_{x,y}$ a matrix of zeros with $x$ rows and $y$ columns, wheras $I_x$ denotes the identity matrix of dimension $x$. Finally, $|a|,a \in \mathbb{R}$, denotes the absolute value of real number $a$, $\|v\|=\sqrt{v^Tv}$ denotes the 2-norm of vector $v$, $\|v\|_A=\sqrt{v^TAv}$ denotes the 2-norm of vector $v$ weighted by matrix $A$, and the infinity norm of a generic matrix $N\in \mathbb{R}^{\bar{r}\times \bar{c}}$, with element $n_{rc}$ in position $(r,c)$, is indicated by $\|N\|_{\infty}=\max\limits_{r \in [1\dots \bar{r}]}\sum_{c=1}^{\bar{c}} |{n_{rc}}|$.

\section{Problem formulation: a unitary approach to learning-based MPC}
\label{sec:system}
We consider a discrete-time, linear time-invariant (LTI), single-input/single-output (SISO) system of order $n$ described by the following autoregressive exogenous (ARX) structure ($\cdot^T$ is the matrix transpose operator):
\begin{equation}
\begin{cases}
z(k+1)= \bar{\theta}^{(1)^T} \varphi^{(1)}_z(k) + v(k) \\
y(k)=z(k) + d(k),\\
\end{cases}
\label{eq:realsys}
\end{equation}
where $z$ is the output, $v$ an additive process disturbance, $y$ the output measure, and $d$ an additive measurement noise. For a given integer $p\geq1$, the regressor $\varphi^{(p)}_z(k)\in\mathbb{R}^{2n+p-1}$ is defined as:
\begin{equation}\nonumber
\begin{array}{rcl}
\varphi^{(p)}_z(k)&=&[z(k),\ldots,z(k-n+1),u(k-1),\ldots,u(k-n+1),\\
&&u(k),\ldots,u(k+p-1)]^T,
\end{array}
\end{equation}
\noindent here $u$ is the system input. In \eqref{eq:realsys}, $\bar{\theta}^{(1)}\in\mathbb{R}^{2n+p-1}$ is a vector of unknown system parameters. The value of $n$ is not known a priori as well.

\begin{assumption} (System and signals)\label{ass:du_bounded}\\
- The system \eqref{eq:realsys} is asymptotically stable;\\
- The static gain from $u$ to $z$ is not zero;\\
- $u(k) \in \mathbb{U}\subset\mathbb{R}, \forall k\in \mathbb{Z}$,\; $\mathbb{U}$ compact and convex;\\
- $|d(k)|\leq \bar{d}, \forall k \in \mathbb{Z},\; \bar{d}>0$ known;\\
- $|v(k)|\leq \bar{v}, \forall k \in \mathbb{Z},\; \bar{v}>0$ possibly not known.\hfill$\square$
\end{assumption}
	\begin{remark}\label{R:generality}$\,$\\
		a) The problem is formulated in the SISO setting for the sake of clarity and notational simplicity. \\
		b) Our working assumptions are rather common in theoretical contributions concerned with system identification, when an unknown-but-bounded assumption is considered for process and measurement disturbances. They are valid in many practical applications as well: a characterization of the available sensors can be used to compute the worst-case measurement error bound $\bar{d}$, while for process disturbances we just assume boundedness, without necessarily knowing the worst-case value $\bar{v}$. Indeed, the worst-case effect of the signal $v(k)$ on the system output will be estimated as part of the uncertainty model in our approach.
	\end{remark}
In this paper we adopt an indirect approach to learning-based control synthesis, i.e., based on a sequence of model learning and model-based design phases.\\
The \textbf{learning} phase (Section~\ref{sub:identified_multistep}) has a twofold role:
\begin{itemize}
\item[1.] Identifying optimal (in the sense specified below) independent $p$-steps ahead prediction models of the type
\begin{equation}
\hat{z}(k+p)=\hat{\theta}^{(p)^T}\varphi^{(p)}_y(k)
\label{eq:multistep_models}
\end{equation}
where $\hat{z}(k+p)$ is the predicted output at time $k+p$, and the model regressor $\varphi^{(p)}_y(k)$ is defined as
\begin{equation}\label{eq:phi_y}
\begin{array}{rcl}
\varphi^{(p)}_y(k)&=&[y(k),\ldots,y(k-o+1),u(k-1),\ldots,u(k-o+1),\\
&&u(k),\ldots,u(k+p-1)]^T.
\end{array}
\end{equation}
with $o$ being the order of the prediction model. Models \eqref{eq:multistep_models} are also defined ``multi-step'' since they directly provide the output prediction $p$ steps ahead, without integrating an underlying simulation model. These models, for all $p\in[1,\bar{p}]$, will be used in the MPC cost definition, thanks to their optimal predictive properties, tailored on specific prediction lengths.
\item[2.] Identifying a state-space model of the type
\begin{equation}
\begin{cases}
X(k+1)=AX(k)+B_1u(k)+M_1w(k)\\
z(k)=CX(k)\\
y(k)=z(k)+d(k)
\end{cases}
\label{eq:model_ss}
\end{equation}
where $X$ is the system state, $w$ is the process disturbance, and $A,\,B_1,\,M_1,\,C$ are the system matrices. One of the contributions of this paper consists also of a novel approach for obtaining a non-conservative bound $\bar{w}$ on the amplitude of the process disturbance $w(t)$ from experimental data. This is fundamental, in a constrained robust design context, to limit the conservativeness of the resulting control approach.
\end{itemize}
In the \textbf{control} phase (Section~\ref{sec:control_design}), we propose a scheme, to be applied to the real system \eqref{eq:realsys}, that asymptotically steers the variable $z(k)$ towards the goal $z_{\rm\scriptscriptstyle goal}$ and that guarantees the fulfillment of the following input and output constraints, for all $k\geq 0$.
\begin{subequations}
\label{eq:constraints_YU}
\begin{align}
u(k)&\in\mathbb{U}
\label{eq:constraints_U}\\
z(k)&\in\mathbb{Z}
\label{eq:constraints_Y}
\end{align}
\end{subequations}
where $\mathbb{Z}$ is assumed convex. As already remarked, to this purpose (i) the multi-step prediction models~\eqref{eq:multistep_models} are used for the definition of the cost function and (ii) the perturbed state-space model~\eqref{eq:model_ss}, with bounds $\bar{d}$ and $\bar{w}$ on $d(t)$ and $w(t)$, respectively, are used for constraint satisfaction.
%
%
%
%
\section{Learning linear prediction models for robust MPC - a Set Membership approach}\label{sub:identified_multistep}
\subsection{Model structure and preliminary considerations}\label{ss:model_structure}
In MPC with horizon $\bar{p}$, at each step $k$ the predictions of variables $z(k+p)$, $p=1,\ldots,\bar{p}$ are needed. Many contributions on robust MPC in the literature \cite{MRRS00,MAYNE2005219,GoKM06,Mayne14} assume that a model of the system in the form~\eqref{eq:model_ss} is available. A common, but quite conservative, setup is to consider $d(k)=0,\forall k$, and $C=I$, i.e. perfectly measurable state, 
and finally to assume a known bound $\bar{w}$ on the worst-case additive process disturbance, such that $\|w(k)\|\leq\bar{w},\,\forall k$. Such a model is then integrated forward in time to predict the state and output values at each future step $k+p$.\\
However to learn, from experimental data, a model of the form \eqref{eq:model_ss} with \emph{good prediction accuracy} at different steps ahead and with \emph{non-conservative bounds} on the process disturbance is a complex task. The parameter identification problem is convex only when a 1-step prediction error method is used, which may return models with poor prediction accuracy over multiple future steps (i.e. poor simulation performance), see e.g. \cite{shook1991identification,Farina11}. On the other hand, the use of a cost function that penalizes the multi-step prediction error, or simulation error, yields a nonlinear program (NLP) in the parameters of the 1-step-ahead model, which, besides the possible trapping in local minima, makes it difficult to derive guaranteed disturbance bounds. Additionally, in practical applications the state might not be fully measured, the system order is not known, and  measurement noise and process disturbances are present. These features make the identification problem even more challenging.

To deal with these problems, the approach taken in this paper consists of learning a different (linear-in-the-parameters) multi-step prediction model~\eqref{eq:multistep_models} for each value of $p\in[1,\bar{p}]$. Besides, as discussed, directly using models~\eqref{eq:multistep_models} in the MPC cost function, we will employ their corresponding worst-case error bounds to optimally compute the bound $\bar{w}$ in a state-space realization \eqref{eq:model_ss}, employed to robustly guarantee stability and constraint satisfaction. 

The choice of a model structure of type \eqref{eq:multistep_models} is motivated by the fact that, integrating over time a model of type \eqref{eq:realsys}, the future system outputs are indeed affine in the regressor $\varphi^{(p)}_y(k)$, containing noise-corrupted output measurements:
\begin{equation}
z(k+p)=\bar{\theta}^{(p)^T}\varphi^{(p)}_y(k)+\underbrace{\bar{\theta}_v^{(p)^T}\mathcal{V}^{(p)}(k)}\limits_{\bar{\varepsilon}_v(k)}+\underbrace{\bar{\theta}_d^{(p)^T}\mathcal{D}^{(p)}(k)}\limits_{\bar{\varepsilon}_d(k)}
\label{eq:realsys_integrated}
\end{equation}
where $\mathcal{V}^{(p)}(k)=[v(k),\ldots,v(k+p-1)]$ and $\mathcal{D}^{(p)}(k)=[d(k),\ldots,d(k+p-1)]$ are the sequences of output disturbance and measurement noise, respectively, values from $k$ to $k+p-1$. The parameter vectors $\bar{\theta}^{(p)}$, $\bar{\theta}_v^{(p)}$ and $\bar{\theta}_d^{(p)}$ are polynomial functions of the true system parameters $\bar{\theta}^{(1)}$ (possibly padded with zeros if the model order $o$ is strictly larger than the true system order $n$), readily obtained by recursion of \eqref{eq:realsys}. In our approach, we will consider instead a distinct parameter vector for each $p$, i.e. $\hat{\theta}^{(p)}$ in \eqref{eq:multistep_models}. A first advantage in doing so is the possibility to efficiently compute not only a nominal multi-step prediction model for each $p$, but also a \emph{model set} which is tight (i.e. the smallest one compatibly with the available prior information and data), through a Set Membership (SM) identification approach \cite{Milanese_book}. From such a model set, we can thus estimate a tight worst-case prediction error bound $\tau_p(\hat{\theta}^{(p)})$ for any given multi-step predictor, i.e.:
\begin{equation}
|z(k+p)-\hat{z}(k+p)|\leq\tau_p(\hat{\theta}^{(p)}).
\label{E:uncertainty_bounds}
\end{equation}
We term the bound $\tau_p(\hat{\theta}^{(p)})$ \emph{global}, since it holds for any regressor value $\varphi^{(p)}_y$ within a suitable compact set $\Phi^{(p)}$, introduced in the remainder. A second advantage in using the multi-step models is the possibility to rigorously define, and then efficiently compute, a model optimality criterion and related optimal models, which minimize the worst-case guaranteed prediction error.

We describe next the considered data-set, followed by the learning approach. An important assumption throughout the paper is the following.
\begin{assumption} (Model order)\\
	$o \geq n$\hfill$\square$
	\label{ass:modelorder}
\end{assumption}
Assumption \ref{ass:modelorder} can be easily satisfied in practice, on the basis of physical considerations on the system at hand and/or by estimating the system order from data \cite{Ljung99}.
%
%
\subsection{Available data-set}
%
For a given prediction step $p$, we denote with $\Phi^{(p)}$ the compact set  containing all the possible regressor vectors $\varphi_y^{(p)}$, i.e., such that for each $p\in[1,\bar{p}]$
\[
\varphi_y^{(p)}(k)\in\Phi^{(p)}\subset{R}^{2o-1+p},\,\forall k\in\mathbb{Z}.
\]
The set $\Phi^{(p)}$ is not known explicitly in general, as it is a complicated set that depends on the system input and disturbance trajectories and initial conditions of interest. Its compactness is due to the fact that the system is asymptotically stable and its input belongs to a compact set (Assumption \ref{ass:du_bounded}). For any fixed regressor instance $\varphi_y^{(p)}(i)\in\Phi^{(p)}$, considering all possible disturbance sequences $\mathcal{V}^{(p)}(i)$ and all possible noise realizations $\mathcal{D}^{(p)}(i)$, there is a set $\boldsymbol{Y}_p(\varphi_y^{(p)})\subset{R}$ containing all the compatible $p$-step ahead output measurements $y_p(i)\doteq z(i+p)+d(i+p)$. In view of Assumptions \ref{ass:du_bounded} and \ref{ass:modelorder} and of the compactness of $\Phi^{(p)}$, also $\boldsymbol{Y}_p(\varphi_y^{(p)})$ is compact. Let us then define
the set
\begin{equation}\label{E:data_set_continuous}
\mathcal{T}_p=\left\{[{\varphi}_y^{(p)^T},y_p]^T: y_p\in\boldsymbol{Y}_p(\varphi_y^{(p)}), \forall \varphi_y^{(p)}\in\Phi^{(p)}\right\}\subset\mathbb{R}^{2o+p}.
\end{equation}
Now assume that a finite number $N_p$ of data $[\tilde{\varphi}_y^{(p)}(i)^T,\tilde{y}_p{(i)}]^T$
is available, where $\tilde{\varphi}_y^{(p)}(i)$ are the available measured instances of the regressor $\varphi_y^{(p)}(i)\in\Phi^{(p)}$, and  $\tilde{y}_p{(i)}=z(i+p)+d(i+p)$  the corresponding measured values of noise-corrupted outputs. We can express our data-set as:
\begin{equation}\label{E:data_batch2}
\tilde{\mathcal{T}}_p^{N_p}=\left\{[\tilde{\varphi}_y^{(p)}(i)^T,\tilde{y}_p{(i)}]^T,i=1,\dots N_p\right\}\subset\mathbb{R}^{2o+p}.
\end{equation}
The set $\tilde{\mathcal{T}}_p^{N_p}$ is countable and contained in its continuous counterpart $\mathcal{T}_p$. The following assumption is introduced:
\begin{assumption}\label{A:data_set} (Data-set)
	For any $\beta>0$, there exists a value of $N_p<\infty$ such that $d_2\left(\mathcal{T}_p,\tilde{\mathcal{T}}_p^{N_p}\right)\leq\beta$,
	where $d_2\left(\mathcal{T}_p,\tilde{\mathcal{T}}_p^{N_p}\right)\doteq\max\limits_{\tau\in\mathcal{T}_p}\min\limits_{\kappa\in\tilde{\mathcal{T}}_p^{N_p}}\|\tau-\kappa\|_2$ is the Haussdorff distance between the sets $\mathcal{T}_p$ and $\tilde{\mathcal{T}}_p^{N_p}$. \hfill$\square$
\end{assumption}
Assumption \ref{A:data_set} implies that $\lim\limits_{{N_p} \to \infty}d_2\left(\mathcal{T}_p,\tilde{\mathcal{T}}_p^{N_p}\right)=0$, i.e. if more points are added to the data-set, the underlying set of all trajectories of interest will be densely covered. This is essentially an assumption on the persistence of excitation of the inputs used for the preliminary experiments, together with an assumption of bound-exploring property of the additive disturbances $d$ and $v$, such that the bounds $\overline{d}$ and $\overline{v}$ in Assumption \ref{ass:du_bounded} are actually tight.

\subsection{Learning procedure}\label{strategy1}
The proposed estimation procedure consists of the following steps:
\begin{enumerate}
	\item[1.] Define an optimality criterion to evaluate the model estimates and the corresponding optimal (i.e. minimal) error bound.
	\item[2.] Derive a procedure to estimate the optimal error bound.
	\item[3.] Based on the available data and the error bound estimate, build the set of all admissible model parameters (Feasible Parameter Set, FPS).
	\item[4.] Using the information summarized in the FPS, for any given model of the form \eqref{eq:multistep_models} compute the related guaranteed error bound $\tau_p$, see \eqref{E:uncertainty_bounds}.
	\item[5.] Select a nominal model with minimal guaranteed error bound.
\end{enumerate}
\subsubsection{Optimal parameter set and optimal error bound}\label{SS:optim_def}
For any $p\in[1,\bar{p}]$, consider a given value of $\hat{\theta}^{(p)}$. From \eqref{eq:multistep_models} and \eqref{eq:realsys_integrated}, the error between the true system output and the predicted one is, for all $k\in \mathbb{Z}$:
\begin{equation}
\begin{array}{ll}
\epsilon_p(\hat{\theta}^{(p)},\varphi^{(p)}_y(k),\mathcal{V}^{(p)}(k),\mathcal{D}^{(p)}(k))&=z(k+p)-\hat{z}(k+p)\\
&=z(k+p)-\hat{\theta}^{(p)^T}\varphi^{(p)}_y(k)
\end{array}
\label{epsilon_1}
\end{equation}
Thus, from \eqref{eq:realsys} and \eqref{eq:multistep_models} we have:
\begin{equation} \begin{split}
y(k+p)&=z(k+p)+d(k+p)\\
&=\hat{\theta}^{(p)^T}\varphi^{(p)}_y(k)+d(k+p)\\
&+\epsilon_p(\hat{\theta}^{(p)},\varphi^{(p)}_y(k),\mathcal{V}^{(p)}(k),\mathcal{D}^{(p)}(k))
\label{hyp1}
\end{split} \end{equation}
The quantity $\epsilon_p(\cdot,\cdot,\cdot,\cdot)$ accounts for the quality of the estimate $\hat{\theta}^{(p)}$, for possible model order mismatch, and for the disturbances $v$ and $d$.
In view of Assumption \ref{ass:du_bounded}, $\epsilon_p$ is bounded. Moreover, from \eqref{hyp1} :
\begin{equation} \begin{array}{l}
|y(k+p)-\hat{\theta}^{(p)^T}\varphi^{(p)}y(k)|\leq  \\
|\epsilon_p(\hat{\theta}^{(p)},\varphi^{(p)}_y(k),\mathcal{V}^{(p)}(k),\mathcal{D}^{(p)}(k))|+\overline{d}\leq \bar{\epsilon}_p(\hat{\theta}^{(p)})+\overline{d}
\end{array}
\label{E:epsilon_bar}
\end{equation}
where $\bar{\epsilon}_p(\hat{\theta}^{(p)})$ is the global error bound with respect to all possible regressors of interest and all feasible disturbance sequences in the compact set $\Phi^{(p)}$:
\begin{equation}\label{epsilon0}
\begin{array}{rcl}
\bar{\epsilon}_p(\hat{\theta}^{(p)})&=&\min\limits_{\epsilon\in\mathbb{R}}\;\epsilon\\
\text{s.t. } |y_p-\hat{\theta}^{(p)^T}\varphi_y^{(p)}|\leq\epsilon+\overline{d} &&\forall (\varphi_y^{(p)},y_p):\left[\begin{array}{c} \varphi_y^{(p)^T}y_p\end{array} \right]^T\in\mathcal{T}_p
\end{array}
\end{equation}
We can now define the optimal parameter values (i.e. optimal models) as those that minimize the bound $\bar{\epsilon}_p(\hat{\theta}^{(p)})$. As a technical assumption, we consider parameters within a compact set $\Omega^{(p)}\subset\mathbb{R}^{2o+p-1}$. $\Omega^{(p)}$ can take into account application-specific prior information on the model parameters or, if no such information is available, it can be chosen as a large-enough set (e.g. by considering box constraints of $\pm10^{15}$ on each element of the parameter vector). This technical assumption allows us to use maximum and minimum operators instead of supremum and infimum. The set $\bar{\Theta}^{(p)}$ of optimal parameter values is:
\begin{equation}\label{theta^o}
\bar{\Theta}^{(p)}=\left\{\tilde{\bar{\theta}}^{(p)}:\,
\tilde{\bar{\theta}}^{(p)}=\arg\min\limits_{{\theta}^{(p)} \in \Omega^{(p)}}\;\bar{\epsilon}_p(\theta^{(p)})\right\},
\end{equation}
and we denote with $\bar{\epsilon}_p^*$ the corresponding optimal error bound:
\begin{equation}\label{epsilon^o}
\bar{\epsilon}_p^*=\min\limits_{\theta^{(p)} \in \Omega^{(p)}}\;\bar{\epsilon}_p(\tilde{\bar{\theta}}^{(p)}).
\end{equation}
Considering \eqref{epsilon0}-\eqref{epsilon^o} we can alternatively write:
\begin{equation}\label{theta^o2}
\begin{array}{rcl}
\bar{\Theta}^{(p)}&=&\left\{ \tilde{\bar{\theta}}^{(p)}\in \Omega^{(p)}: |y_p-\tilde{\bar{\theta}}^{(p)^T}\varphi_y^{(p)}|\leq\bar{\epsilon}_p^*+\overline{d},\right.\\
										  &&\left.\forall (\varphi_y^{(p)},y_p):\left[ \begin{array}{c} \varphi_y^{(p)^T}y_p\end{array} \right]^T \in\mathcal{T}_p \right\}
\end{array}
\end{equation}
\subsubsection{Estimating the optimal error bound}\label{SS:opt_estim}

The optimal models and optimal error bound cannot be computed in practice, since the solution to \eqref{theta^o} would imply the availability of an infinite number of data and the solution to an infinite-dimensional optimization program. However, we can compute an estimate $\underline{\lambda}_p\approx\bar{\epsilon}_p^{*}$ from the available experimental data, by solving the following linear program (LP):
\begin{equation}
\begin{array}{c}
\underline{\lambda}_p=\min\limits_{\theta^{(p)} \in \Omega^{(p)},\lambda\in \mathbb{R}^+}\,\lambda \\
\text{subject to}\\
|\tilde{y}_p-\theta^{(p)^T}\tilde{\varphi}_y^{(p)}|\leq\lambda+\overline{d},\,\forall (\tilde{\varphi}_y^{(p)},\tilde{y}_p):\left[\begin{array}{c} \tilde{\varphi}_y^{(p)^T}\tilde{y}_p\end{array} \right]^T\in\tilde{\mathcal{T}}_p^{N_p}
\end{array}
\label{E:bound_estim}
\end{equation}
The following result shows that, under the considered working assumptions, the value of $\underline{\lambda}_p$ converges to the optimal one, $\bar{\epsilon}_p^{*}$.
\begin{theorem}
	Let Assumptions \ref{ass:du_bounded}-\ref{A:data_set} hold. Then:
	\begin{enumerate}
		\item $\underline{\lambda}_p\leq\bar{\epsilon}_p^*$;
		\item $\forall \rho\in(0,\bar{\epsilon}_p^*]\; \exists \; N_p<\infty\; :\; \underline{\lambda}_p\geq\bar{\epsilon}_p^*-\rho$\hfill$\square$
	\end{enumerate}
	\label{convergence}
\end{theorem}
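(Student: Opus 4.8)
\emph{Proof plan.} The statement splits into a ``soft'' inclusion inequality (item~1) and a ``density'' inequality (item~2), which I would handle separately. The common ingredients are: the equivalence between the $\epsilon$-type definitions \eqref{epsilon0} and the constraint-feasibility descriptions such as \eqref{theta^o2}; the fact that the feasible region of the LP \eqref{E:bound_estim} only uses the subset $\tilde{\mathcal{T}}_p^{N_p}\subseteq\mathcal{T}_p$; and a uniform Lipschitz property of the constraint function over the compact admissible parameter set $\Omega^{(p)}$.

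For item~1, the plan is a one-line feasibility argument. Since $\Omega^{(p)}$ is compact and $\bar\epsilon_p(\cdot)$ is continuous on it, the set $\bar\Theta^{(p)}$ in \eqref{theta^o} is nonempty; pick any $\tilde{\bar\theta}^{(p)}\in\bar\Theta^{(p)}$. By the equivalent description \eqref{theta^o2}, this parameter satisfies $|y_p-\tilde{\bar\theta}^{(p)^T}\varphi_y^{(p)}|\le\bar\epsilon_p^*+\overline{d}$ for every $[\varphi_y^{(p)^T}y_p]^T\in\mathcal{T}_p$, hence a fortiori for every element of the finite subset $\tilde{\mathcal{T}}_p^{N_p}$. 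Therefore the pair $(\tilde{\bar\theta}^{(p)},\bar\epsilon_p^*)$ is feasible for the LP \eqref{E:bound_estim}, and since $\underline{\lambda}_p$ is its optimal value, $\underline{\lambda}_p\le\bar\epsilon_p^*$.

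For item~2, the key idea is that the constraints evaluated on the finite but asymptotically dense data-set are ``almost'' the constraints over the whole continuous set $\mathcal{T}_p$. First I would show that for every $\theta^{(p)}\in\Omega^{(p)}$ the map $[\varphi^T,y]^T\mapsto|y-\theta^{(p)^T}\varphi|$ is Lipschitz w.r.t.\ the $2$-norm with constant $\sqrt{1+\|\theta^{(p)}\|^2}$ (triangle inequality plus Cauchy--Schwarz), and then use compactness of $\Omega^{(p)}$ to obtain the \emph{uniform} constant $L\doteq\sqrt{1+\max_{\theta^{(p)}\in\Omega^{(p)}}\|\theta^{(p)}\|^2}<\infty$. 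Next let $\hat\theta_{\rm\scriptscriptstyle LP}^{(p)}\in\Omega^{(p)}$ attain the LP optimum (existence follows from a compact feasible set and a continuous objective), so $|\tilde y_p-\hat\theta_{\rm\scriptscriptstyle LP}^{(p)^T}\tilde\varphi_y^{(p)}|\le\underline{\lambda}_p+\overline{d}$ on all of $\tilde{\mathcal{T}}_p^{N_p}$. Fix $\beta>0$ and, by Assumption~\ref{A:data_set}, take $N_p<\infty$ with $d_2(\mathcal{T}_p,\tilde{\mathcal{T}}_p^{N_p})\le\beta$. For an arbitrary $\tau=[\varphi_y^{(p)^T}y_p]^T\in\mathcal{T}_p$ choose the nearest $\kappa=[\tilde\varphi_y^{(p)^T}\tilde y_p]^T\in\tilde{\mathcal{T}}_p^{N_p}$; then $\|\tau-\kappa\|_2\le\beta$, and the Lipschitz bound gives $|y_p-\hat\theta_{\rm\scriptscriptstyle LP}^{(p)^T}\varphi_y^{(p)}|\le\underline{\lambda}_p+\overline{d}+L\beta$. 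Since $\tau$ was arbitrary, the pair $(\hat\theta_{\rm\scriptscriptstyle LP}^{(p)},\underline{\lambda}_p+L\beta)$ is feasible for problem \eqref{epsilon0}, hence $\bar\epsilon_p(\hat\theta_{\rm\scriptscriptstyle LP}^{(p)})\le\underline{\lambda}_p+L\beta$, and therefore by \eqref{epsilon^o} $\bar\epsilon_p^*\le\underline{\lambda}_p+L\beta$, i.e.\ $\underline{\lambda}_p\ge\bar\epsilon_p^*-L\beta$. Finally, given $\rho\in(0,\bar\epsilon_p^*]$, I would apply Assumption~\ref{A:data_set} with $\beta=\rho/L$ to get a finite $N_p$ for which $\underline{\lambda}_p\ge\bar\epsilon_p^*-\rho$, which is the claim.

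The only genuinely delicate point I anticipate is the \emph{uniformity} of the Lipschitz constant $L$: it is what lets a single density radius $\beta$ control the feasibility gap for the particular, a priori unknown, LP optimizer $\hat\theta_{\rm\scriptscriptstyle LP}^{(p)}$, and it is precisely where the compactness of $\Omega^{(p)}$ (and, for the well-posedness of $\bar\epsilon_p(\cdot)$, boundedness of $\mathcal{T}_p$) is used in an essential way. The remaining items (nonemptiness of $\bar\Theta^{(p)}$, attainment of the LP optimum, and finiteness of $\bar\epsilon_p(\theta^{(p)})$ for $\theta^{(p)}\in\Omega^{(p)}$) are routine consequences of continuity together with the compactness already assumed in the setup.
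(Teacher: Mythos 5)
Your proof is correct. Claim 1 is handled exactly as in the paper: the optimal parameter is feasible for the LP \eqref{E:bound_estim} because $\tilde{\mathcal{T}}_p^{N_p}\subset\mathcal{T}_p$, so the finite-data optimum can only be smaller. For claim 2, however, you take a genuinely different route. The paper works ``from the continuous problem down'': it fixes a regressor/output pair $(\varphi_{y,0}^{(p)},y_{p,0})$ realizing the optimal bound $\bar{\epsilon}_p^*$, lower-bounds $\underline{\lambda}_p$ by the single LP constraint at the sample nearest to that pair, and in doing so relies on the identity $\min_{\theta^{(p)}\in\Omega^{(p)}}|y_{p,0}-\theta^{(p)^T}\varphi_{y,0}^{(p)}|=\bar{\epsilon}_p^*+\bar{d}$, i.e.\ on that datum being worst-case uniformly in $\theta^{(p)}$. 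You work ``from the LP up'': you show the LP minimizer is $(\underline{\lambda}_p+L\beta)$-feasible for the continuous problem \eqref{epsilon0}, with the uniform Lipschitz constant $L=\sqrt{1+\max_{\theta^{(p)}\in\Omega^{(p)}}\|\theta^{(p)}\|^2}$ supplied by compactness of $\Omega^{(p)}$, whence $\bar{\epsilon}_p^*\le\underline{\lambda}_p+L\beta$ follows directly from \eqref{epsilon^o} without ever locating a saddle point of the continuous min--max problem. Both arguments hinge on Assumption \ref{A:data_set} together with a Cauchy--Schwarz bound over the compact set $\Omega^{(p)}$, and they deliver the same $O(\beta)$ rate (your $L$ is, up to the choice of combining the two terms, the paper's factor $1+\max_{\theta^{(p)}\in\Omega^{(p)}}\|\theta^{(p)}\|_2$); the advantage of your direction is that it only needs near-feasibility of the computable LP optimizer for the continuous constraints, which makes the perturbation step self-contained and avoids any assumption on where, and for which parameters, the worst case of \eqref{epsilon^o} is attained.
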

\vspace{0.2cm}
\begin{proof}
See the Appendix.\hfill$\blacksquare$
\end{proof}
Theorem \ref{convergence} implies that $\lim\limits_{N_p\to\infty}(\bar{\epsilon}_p^*-\underline{\lambda}_p)=0^+$, i.e. that the solution of \eqref{E:bound_estim} converges to the optimum from below. This is a consequence of the considered problem settings, where a finite data-set is available. In practice, one can increase the number $N_p$ of experimental data and observe the behavior of $\underline{\lambda}_p$, which converges to a limit provided that the data are informative enough. Then, a practical approach to compensate for the uncertainty caused by the use of a finite number of measurements  is to inflate the value $\underline{\lambda}_p$:
\begin{equation}\label{E:alpha}
\hat{\bar{\epsilon}}_p=\alpha\underline{\lambda}_p,\;\alpha>1.
\end{equation}
With sufficiently large number of and exciting data points, a coefficient $\alpha\simeq1$ can be chosen. We show an example of such a procedure in Section \ref{sec:numerical_results}, and consider the following assumption in the remainder:
\begin{assumption} (Optimal error bound)\label{A:estim_bound}\\
The chosen value of $\alpha$ is such that $\hat{\bar{\epsilon}}_p\geq\bar{\epsilon}_p^*$.\hfill$\square$
\end{assumption}

\subsubsection{Feasible Parameter Set}\label{FeasibleParameterSet}
We exploit the estimated optimal error bound to construct the tightest set of parameter values consistent with all the prior information, i.e. the FPS $\Theta^{(p)}$:
\begin{equation}
\begin{array}{rcl}
\Theta^{(p)}&=&\left\{\theta^{(p)} \in \Omega^{(p)} : |\tilde{y}_p-\theta^{(p)^T}\tilde{\varphi}^{(p)}_y|\leq\hat{\bar{\epsilon}}_p+\overline{d},\right.\\
&&\left.\forall (\tilde{\varphi}^{(p)}_y,\tilde{y}_p):\left[\begin{array}{c} \tilde{\varphi}^{(p)^T}_y\tilde{y}_p\end{array} \right]^T\in\tilde{\mathcal{T}}_p^{N_p}\right\}
\end{array}
\label{FPSdef}
\end{equation}
The set $\Theta^{(p)}$ is non-empty by construction, since under Assumption \ref{A:estim_bound} we have (see \eqref{theta^o2} and \eqref{FPSdef}) that $
\bar{\Theta}^{(p)}\subseteq\Theta^{(p)}$. If the FPS is bounded, it results in a polytope with at most $N_p$ faces. If it is unbounded, then this indicates that the available measured data are not informative enough to derive a bound on the worst-case model error,  and that $N_p$ must be increased until a bounded FPS is obtained. This situation usually occurs when very few data points are used (e.g. $N_p<2o+p-1$) or the preliminary experiments are not informative enough.

\subsubsection{Error bound computation for a given model}\label{SS:error_bounds}
Consider now a given parameter vector $\hat{\theta}^{(p)}$ and any $\varphi^{(p)}_y(k)\in\Phi^{(p)}$. From \eqref{eq:realsys_integrated}, \eqref{epsilon_1}, and
\eqref{theta^o2} it follows that
\begin{equation}
\lvert z(k+p) - \hat{z}(k+p) \rvert
\leq\lvert (\bar{\theta}^{(p)}-\hat{\theta}^{(p)})^T\varphi_y^{(p)}(k)\rvert+\bar{\epsilon}_p^*
\label{prediction error}
\end{equation}
In view of Assumption \ref{A:estim_bound}, the global worst-case prediction error bound for model $\hat{\theta}^{(p)}$ is then:
\begin{equation}
\tau_p(\hat{\theta}^{(p)})=\max\limits_{\varphi_y^{(p)}\in\Phi^{(p)}}\max\limits_{\theta \in\Theta^{(p)}}\lvert (\theta-\hat{\theta}^{(p)})^T \varphi_y^{(p)} \rvert + \hat{\bar{\epsilon}}_p
\label{eq:tau_def}
\end{equation}
This bound cannot be computed exactly with finite data under the considered assumptions, and its computation would be intractable also if the set $\Phi^{(p)}$ were known precisely. The complexity may be reduced only if additional assumptions are made, e.g. that $\Phi^{(p)}$ is a polytope, which however may result in a high conservativeness.
However, we can approximate $\tau_p$ by computing the outer maximization in \eqref{eq:tau_def} over the finite data-set $\tilde{\mathcal{T}}_p^{N_p}$:
\begin{equation}\label{tau_estim}
\underline{\tau}_p(\hat{\theta}^{(p)})=\max\limits_{i=1,\ldots,N_p}\max\limits_{\theta \in\Theta^{(p)}}\lvert (\theta-\hat{\theta}^{(p)})^T\tilde{\varphi}_y^{(p)}(i) \rvert + \hat{\bar{\epsilon}}_p
\end{equation}
The following result shows convergence from below of $\underline{\tau}_p(\hat{\theta}^{(p)})$ to $\tau_p(\hat{\theta}^{(p)})$.
\begin{lemma}
	Let Assumptions \ref{ass:du_bounded}-\ref{A:data_set} hold. Then, for any $\hat{\theta}^{(p)}\in\Omega^{(p)}$:
\begin{enumerate}
	\item $\underline{\tau}_p(\hat{\theta}^{(p)})\leq\tau_p(\hat{\theta}^{(p)})$;
	\item $\forall \rho\in(0,\tau_p(\hat{\theta}^{(p)})]\; \exists \; N_p<\infty\; :\; \underline{\tau}_p(\hat{\theta}^{(p)})\geq\tau_p(\hat{\theta}^{(p)})-\rho$
\end{enumerate}\hfill$\square$
\label{convergence2}
\end{lemma}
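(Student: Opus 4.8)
The plan is to follow the same two-part scheme and the same proof technique used for Theorem~\ref{convergence}, now replacing the data set $\mathcal{T}_p$ by the regressor set $\Phi^{(p)}$ and the residual bound by the ``model-gap'' function $g(\varphi_y^{(p)})\doteq\max_{\theta\in\Theta^{(p)}}\lvert(\theta-\hat{\theta}^{(p)})^T\varphi_y^{(p)}\rvert$, so that by \eqref{eq:tau_def} and \eqref{tau_estim} we have $\tau_p(\hat{\theta}^{(p)})=\max_{\varphi_y^{(p)}\in\Phi^{(p)}}g(\varphi_y^{(p)})+\hat{\bar{\epsilon}}_p$ and $\underline{\tau}_p(\hat{\theta}^{(p)})=\max_{i=1,\dots,N_p}g(\tilde{\varphi}_y^{(p)}(i))+\hat{\bar{\epsilon}}_p$. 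First I would record the structural facts that make the argument go through: $\Omega^{(p)}$ is compact by the technical assumption, hence the FPS $\Theta^{(p)}$ of \eqref{FPSdef}, being closed and contained in $\Omega^{(p)}$, is compact; therefore the inner maximum defining $g$ is attained, $g$ is finite, and — being a pointwise maximum of affine functions of $\varphi_y^{(p)}$ that all share the Lipschitz constant $\|\theta-\hat{\theta}^{(p)}\|_2$ — $g$ is globally Lipschitz with constant $L\doteq\max_{\theta\in\Theta^{(p)}}\|\theta-\hat{\theta}^{(p)}\|_2<\infty$. Moreover $\Phi^{(p)}$ is compact and non-empty (Section~\ref{sub:identified_multistep}), so $g$, being continuous, attains its maximum over $\Phi^{(p)}$ at some $\varphi^\star\in\Phi^{(p)}$.

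For claim~1 I would simply observe that every measured regressor $\tilde{\varphi}_y^{(p)}(i)$ lies in $\Phi^{(p)}$ by construction, so $\underline{\tau}_p(\hat{\theta}^{(p)})$ is the maximum of the same map $g(\cdot)+\hat{\bar{\epsilon}}_p$ over the finite subset $\{\tilde{\varphi}_y^{(p)}(i)\}_{i=1}^{N_p}\subseteq\Phi^{(p)}$; a maximum over a subset never exceeds the maximum over the whole set, which gives $\underline{\tau}_p(\hat{\theta}^{(p)})\leq\tau_p(\hat{\theta}^{(p)})$.

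For claim~2 the idea is to use Assumption~\ref{A:data_set} to place a data point arbitrarily close to the maximizer $\varphi^\star$ and then transfer this via the Lipschitz bound on $g$. If $L=0$ then $g\equiv0$ and $\underline{\tau}_p=\tau_p=\hat{\bar{\epsilon}}_p$, so the statement is trivial; hence assume $L>0$. Fix $\rho\in(0,\tau_p(\hat{\theta}^{(p)})]$ and set $\beta=\rho/L>0$. Pick any $y^\star\in\boldsymbol{Y}_p(\varphi^\star)$ (non-empty), so that $[\varphi^{\star T},y^\star]^T\in\mathcal{T}_p$; by Assumption~\ref{A:data_set} there exist $N_p<\infty$ and an index $j$ with $\big\|[\varphi^{\star T},y^\star]^T-[\tilde{\varphi}_y^{(p)}(j)^T,\tilde{y}_p(j)]^T\big\|_2\leq\beta$, and in particular, since dropping coordinates cannot increase the Euclidean norm, $\|\varphi^\star-\tilde{\varphi}_y^{(p)}(j)\|_2\leq\beta$. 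The Lipschitz property of $g$ then yields $g(\tilde{\varphi}_y^{(p)}(j))\geq g(\varphi^\star)-L\beta=g(\varphi^\star)-\rho$, whence $\underline{\tau}_p(\hat{\theta}^{(p)})\geq g(\tilde{\varphi}_y^{(p)}(j))+\hat{\bar{\epsilon}}_p\geq g(\varphi^\star)+\hat{\bar{\epsilon}}_p-\rho=\tau_p(\hat{\theta}^{(p)})-\rho$, which is the claim. Combining the two parts gives $\tau_p(\hat{\theta}^{(p)})-\rho\leq\underline{\tau}_p(\hat{\theta}^{(p)})\leq\tau_p(\hat{\theta}^{(p)})$, i.e. convergence from below as $N_p\to\infty$.

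The step I expect to require the most care is the bookkeeping one: checking that all the compactness/finiteness ingredients are genuinely available ($\Omega^{(p)}$, hence $\Theta^{(p)}$, hence $L<\infty$; $\Phi^{(p)}$ compact and non-empty, hence existence of $\varphi^\star$; non-emptiness of $\boldsymbol{Y}_p(\varphi^\star)$), and properly disposing of the degenerate case $L=0$, since otherwise the choice $\beta=\rho/L$ is ill posed. The only substantive conceptual point is the passage from the Hausdorff-distance bound on the $(2o+p)$-dimensional set $\mathcal{T}_p$ to a bound on the distance between the regressor sub-vectors, which is immediate, and the fact that the inner maximization set $\Theta^{(p)}$ is identical in $\tau_p$ and $\underline{\tau}_p$, so no approximation error enters through it.
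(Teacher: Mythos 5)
Your proposal is correct and follows essentially the same route as the paper's proof: claim~1 by the subset-maximum argument, and claim~2 by locating a data point within Hausdorff distance $\beta$ of the maximizer of the inner term over $\Phi^{(p)}$ and transferring the bound with constant $\max_{\theta\in\Theta^{(p)}}\|\theta-\hat{\theta}^{(p)}\|_2$ (the paper carries this out via an explicit add-and-subtract plus Cauchy--Schwarz rather than naming it a Lipschitz property, and it leaves the degenerate case $L=0$ implicit, which you handle cleanly).
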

\begin{proof}
See the Appendix.\hfill$\blacksquare$
\end{proof}
Considerations similar to those reported after Theorem \ref{convergence} for $\underline{\lambda}_p$ hold also for the bound $\underline{\tau}_p(\hat{\theta}^{(p)})$, i.e. it is possible to monitor its behavior for increasing values of $N_p$ in order to evaluate convergence. As done in \eqref{E:alpha}, we inflate this bound to account for the uncertainty deriving from our finite data-set:
\begin{equation}\label{E:gamma}
\hat{\tau}_p(\hat{\theta}^{(p)})=\gamma\underline{\tau}_p(\hat{\theta}^{(p)}),\;\gamma>1,
\end{equation}
and we assume that the resulting estimate is larger than the true bound:
\begin{assumption} (Error bound for a given $\hat{\theta}^{(p)}$)\label{A:estim_bound2}\\
The chosen value of $\gamma$ is such that $\hat{\tau}_p(\hat{\theta}^{(p)})\geq\tau_p(\hat{\theta}^{(p)})$.\hfill$\square$
\end{assumption}

\subsubsection{Selection of nominal multi-step models}\label{SS:nominal model}

The last step in the proposed estimation algorithm is to select a nominal multi-step model for each prediction step $p$. The most common approach is probably based on least-squares estimation: in this case the results of Section \ref{SS:error_bounds} can be applied to obtain an estimate of the resulting global error bound. Since our final goals are to employ the multi-step models in a robust MPC algorithm and estimate bound $\bar{w}$ for the perturbed model \eqref{eq:model_ss} in a non-conservative manner, we rather seek the model that minimizes the worst-case error bound for each $p$ value. Specifically, considering that the tightest set that contains the optimal parameter values (i.e. with minimum error, see Section \ref{SS:optim_def}) is the FPS $\Theta^{(p)}$, we search within this set for a parameter value that minimizes the resulting bound $\hat{\tau}_p(\hat{\theta}^{(p)})$:
\begin{equation}
\hat{\theta}^{(p)*}=\arg\min\limits_{\hat{\theta}^{(p)}\in\Theta^{(p)}}\hat{\tau}_p(\hat{\theta}^{(p)}).
\label{eq:theta_opt}
\end{equation}
The resulting nominal model reads
\begin{equation}
\hat{z}(k+p)=\hat{\theta}^{(p)*^T}\varphi_y^{(p)}(k)
\label{minmaxmax_orig}
\end{equation}
 and the associated error bound estimate is $\hat{\tau}_p(\hat{\theta}^{(p)*})$.
%
Note that term $\hat{\bar{\epsilon}}_p$, see \eqref{tau_estim}, does not depend on $\hat{\theta}^{(p)*}$ and it converges to the optimal error bound $\bar{\epsilon}_p^*$ as $N_p$ increases (Theorem \ref{convergence}).

\begin{remark} \label{R:computation}  $\hat{\theta}^{(p)*}$ in \eqref{eq:theta_opt} reads
\begin{equation}
\hat{\theta}^{(p)*}=\arg\min\limits_{\hat{\theta}^{(p)}\in\Theta^{(p)}}
\max_{i=1,\ldots,N_p}\max\limits_{\theta \in\Theta^{(p)}}\lvert (\theta-\hat{\theta}^{(p)})^T\tilde{\varphi}_y^{(p)}(i) \rvert.
\label{minmaxmax}
\end{equation}
This problem can be solved by reformulating it as $2\,N_p+1$ LPs \cite{boyd2004convex},\cite{MiTe85}. 
\end{remark}

\subsection{Derivation of the state-space model realization and estimation of the corresponding process disturbance bound} \label{sec:CDC_modelforcontrol}
In this section we describe the derivation of the state-space model \eqref{eq:model_ss} and of of the bound $\bar{w}$ of the corresponding disturbance $w(k)$.\\
First of all, we define the equations of the state-space model  \eqref{eq:model_ss} based on the nominal $1$-step ahead predictor, i.e., \eqref{minmaxmax_orig} with $p=1$. To do so, recalling the structure of $\varphi_y^{(p)}$ \eqref{eq:phi_y}, note that we can partition the parameter vector $\hat{\theta}^{(p)}$ of a prediction model as follows:
\begin{equation}
\label{eq:model_partition}
\hat{\theta}^{(p)}=\left[\hat{\theta}^{(p)^T}_{AR}\; \hat{\theta}^{(p)^T}_{U}\; \hat{\theta}^{(p)^T}_{\bar{U}} \right]^T,
\end{equation}
where $\hat{\theta}^{(p)}_{AR}\in\mathbb{R}^o$, $\hat{\theta}^{(p)}_{U}\in\mathbb{R}^{o-1}$ and $\hat{\theta}^{(p)}_{\bar{U}}\in\mathbb{R}^p$ are the parameters pertaining to the past $o$ output values, the past $o-1$ input values, and the current and future inputs, respectively, up to $p-1$ steps ahead.
We define the state vector of the  model \eqref{eq:model_ss} as
\begin{equation}
X(k)=[z(k), \dots , z(k-o+1), u(k-1), \dots,  u(k-o+1) ]^T,
\label{eq:state_z}
\end{equation}
Denoting the process disturbance as $w(k)\in\mathbb{R}$ (accounting for both the disturbance $v$ and prediction error stemming from the learning phase), the state $X(k)$ evolves according to \eqref{eq:model_ss} with the following matrices:
\begin{equation}
\begin{array}{c}
A=\left[\begin{array}{cc}
\hat{\theta}_{AR}^{(1)^*T}&\hat{\theta}^{(1)^*T}_{U}\\
\begin{array}{cc}I_{o-1}&0_{o-1,1}\end{array}&0_{o-1,o-1}\\
0_{o-1,o}&\begin{array}{c}0_{1,o-1}\\\begin{array}{cc}I_{o-2}&0_{o-2,1}\end{array}\end{array}\end{array}\right],\,B_1=\left[\begin{array}{c}
\hat{\theta}_{\bar{U}}^{(1)*}\\		
0_{o-1,1}\\
1\\
0_{o-2,1}\end{array}\right]\\																	
M_1=\begin{bmatrix}1\\
0_{2(o-1),1}\end{bmatrix},\;C=\begin{bmatrix}1 & 0_{1,2o-2}  \end{bmatrix}
\end{array}
\label{eq:matrices}
\end{equation}
\begin{remark}
The model \eqref{eq:matrices} of order $2o-1$ is considered to be in minimal form
\end{remark}
Secondly, we need to define the bound $\bar{w}$ on the amplitude of $w(k)$. 
As anticipated, to this aim we will use the computed FPSs $\Theta^{(p)}$. More specifically, the following approach is proposed.\\
Starting from a noise-corrupted initial state at step $k$ and by iteration of the state-space model \eqref{eq:model_ss} (discarding process disturbance), we can compute a $p$-steps ahead prediction $\hat{z}^{(1)}(k+p)$ of the variable $z(k+p)$ as follows:
\begin{equation}
\label{eq:Id_p_predictor}
\hat{z}^{(1)}(k+p)=CA^pX_y(k)+C\sum_{i=0}^{p-1}A^iB_1u(k+p-i-1)
\end{equation}
where 
\begin{equation}
X_y(k)=[y(k),\ldots,y(k-o+1),\,u(k-1)\ldots,\,u(k-o+1)]^T
\label{eq:state_y}
\end{equation}
We can write \eqref{eq:Id_p_predictor} equivalently as
\begin{equation}
\hat{z}^{(1)}(k+p)=\hat{\theta}^{(p),1^T}\varphi^{(p)}_y(k).
\label{eq:predictor_k_p}
\end{equation}
This is a multi-step prediction model whose parameter vector $\hat{\theta}^{(p),1}$, in view of \eqref{eq:Id_p_predictor}, is composed of polynomial combinations of the entries of the $1$-step ahead prediction model parameter vector $\hat{\theta}^{(1)*}$. Clearly, $\hat{\theta}^{(p),1}$ is in general different from $\hat{\theta}^{(p)*}$ used in \eqref{minmaxmax_orig}, and therefore $\hat{z}(k+p)\neq \hat{z}^{(1)}(k+p)$. At this point, we can use the FPSs derived in Section \ref{FeasibleParameterSet} to estimate the associated worst-case prediction error bounds, $\hat{\tau}_p(\hat{\theta}^{(p),1})$:
\begin{equation}\label{eq:bound_ss_multistep}
|z(k+p)-\hat{z}^{(1)}(k+p) | \leq \hat{\tau}_p(\hat{\theta}^{(p),1})
\end{equation}
On the other hand, by initializing the state-space model with the true (i.e. without measurement noise) initial state \eqref{eq:state_z}, and including the presence of process disturbance $w$, we can also write:
\begin{equation}
\label{eq:Id_p_predictor_system}
\begin{array}{ll}
{z}(k+p)=&CA^pX(k)+C\sum_{i=0}^{p-1}A^i(B_1u(k+p-i-1)\\
&+M_1w(k+p-i-1))
\end{array}
\end{equation}
Then, taking the difference between \eqref{eq:Id_p_predictor_system} and \eqref{eq:Id_p_predictor}, we obtain:
\begin{equation}
\begin{array}{rcl}
z(k+p)-\hat{z}^{(1)}(k+p)&=&C\sum_{i=0}^{p-1}{A^iM_1w(k+p-i-1)}\\
&&-CA^p\left(X_y(k)-X(k)\right),
\end{array}
\label{eq: prediction error}
\end{equation}
which highlights the prediction error due to the process disturbance $w$, and the one due to the measurement noise on the initial condition, $X_y(k)-X(k)$. Note that the latter is equal to zero for all state components pertaining to the past input values, and it is at most equal to $\bar{d}$ for all components pertaining to the past output values. Thus, recalling that $|w(k)|\leq \bar{w}$, we have:
\begin{equation}\label{eq:bound_ss_iteration}
|z(k+p)-\hat{z}^{(1)}(k+p)| \leq  \sum_{i=0}^{p-1}| CA^iM_1 |\bar{w}+\|CA^{p}E\|_{\infty}\bar{d}
\end{equation}
where $E=\left[I_{o}\;0_{(o-1), o}\right]^T$. 
The idea proposed here is to compute $\bar{w}$ as the minimum value such that the bounds \eqref{eq:bound_ss_iteration} do not violate the (tight) bounds \eqref{eq:bound_ss_multistep} for all $p\in[1,\overline{p}]$:
\begin{equation}\label{eq:wcomputation_dknown}
\begin{array}{c}
\bar{w}=\arg\min\limits_{w \in \mathbb{R}^+}w\\
\textit{s.t. } \sum_{i=0}^{p-1}| CA^iM_1 |w+\|CA^{p}E\|_{\infty}\bar{d} \geq \hat{\tau}_p(\hat{\theta}^{(p),1}),\, p\in[1,\bar{p}]
\end{array}
\end{equation}
Note that problem \eqref{eq:wcomputation_dknown} always admits a finite feasible solution thanks to the boundedness of $\hat{\tau}_p(\hat{\theta}^{(p),1}), \forall p\in[1,\bar{p}]$

\section{MPC for tracking with learned models}\label{sec:control_design}
As anticipated in Section~\ref{sec:system}, the MPC controller devised in this paper uses, in the cost function optimized at each time instant $k$, the optimal $p$-steps ahead models~\eqref{eq:multistep_models} to predict in the best possible way the future evolution of the output variable, while the perturbed state-space model~\eqref{eq:model_ss} is used to rigorously define the constraints and ensure recursive feasibility. For notational convenience, we rewrite the predictions \eqref{minmaxmax_orig} as outputs of model \eqref{eq:model_ss} (where matrices $A,B_1,C$ are defined in \eqref{eq:matrices}), as follows:
\begin{align}
z_p(k)=C_p X(k)+D_p U(k)
\label{eq:perturbedy}
\end{align}
where $U(k)=\begin{bmatrix}u(k) & \dots& u(k+\bar{p})\end{bmatrix}$
$C_p=\begin{bmatrix}\hat{\theta}^{(p)*^T}_{{AR}} & \hat{\theta}^{(p)*^T}_{U}\end{bmatrix}$, $D_p=\begin{bmatrix}
\hat{\theta}^{(p)*^T}_{\bar{U}} & 0_{1,\overline{p}+1-p}\end{bmatrix}$ and we denote $z_p(k)=\hat{z}(k+p)$ for brevity.
For later use we also define $C_0=C$ and $D_0=0_{1,\overline{p}+1}$ such that we can write $z(k)=z_0(k)=C_0X(k)+D_0U(k)$.
\subsection{State observer and tube-based control approach}
\label{sec:sub:Observer}
Since $z(k)$ is measured with some noise, the state $X(k)$ of the system \eqref{eq:model_ss} cannot be perfectly reconstructed as a suitable collection of the past available outputs and inputs. For this reason, a Luenberger state observer is employed. To design the observer on the basis of the model \eqref{eq:model_ss}, it is beneficial to introduce an estimate $\hat{w}(k)$ of the disturbance $w$. The term $\hat{w}(k)$ will result from a suitable optimization problem introduced later on, in Section \ref{subsec:cost_fcn}. The observer takes then the following form:
\begin{equation}
\hat{X}(k+1)=A\hat{X}(k)+B_1 u(k)+M_1\hat{w}(k)+L(y(k)-C\hat{X}(k))
\label{eq:state_obs}
\end{equation}
where $\hat{X}(k)$ is the estimated state and the matrix $L$ is chosen such that the closed-loop matrix $(A-LC)$ is Schur stable.\\
Furthermore, for application of a tube-based robust control method inspired by~\cite{mayne2005robust}, we define the nominal dynamic system related to \eqref{eq:model_ss}, where again the disturbance estimate $\hat{w}(k)$ is included, i.e.
\begin{equation}
\bar{X}(k+1)=A\bar{X}(k)+B_1\bar{u}(k)+ M_1 \hat{w}(k)
\label{eq:nomstate_eq}
\end{equation}
The input $u(k)$, to be applied to system \eqref{eq:model_ss} at time instant $k$, is defined as the sum of two components as follows.
\begin{equation}
u(k)=\bar{u}(k)+K(\hat{X}(k)-\bar{X}(k))
\label{eq:controllaw}
\end{equation}
The second component (i.e., $K(\hat{X}(k)-\bar{X}(k))$) is given by a suitable proportional control law, aiming to reduce the displacement of the state $\bar{X}(k)$ of \eqref{eq:nomstate_eq} with respect to the state estimate $\hat{X}(k)$, available at time $k$. The gain $K$ is defined in such a way that the closed-loop transition matrix $A+B_1K$ is Schur stable, e.g. by pole-placement or LQR design.
The corresponding nominal outputs are, for all $p \in [0,\bar{p}]$
\begin{align}
\bar{z}_p(k)=C_p \bar{X}(k)+D_p \bar{U}(k)
\label{eq:perturbedy_nominal}
\end{align}
where $\bar{U}(k)=\begin{bmatrix}\bar{u}(k) & \dots& \bar{u}(k+\bar{p})\end{bmatrix}^T$. We finally define $\bar{z}(k)=C_0\bar{X}(k)=\bar{z}_0(k)$.
\subsection{Definition of the cost function}\label{subsec:cost_fcn}
The goal is to steer variable $z(k)$ in order to track the (possibly piece-wise) constant goal $z_{\rm\scriptscriptstyle goal}$. However, tracking this value could lead to infeasibility problems: to avoid them, inspired by \cite{limon2008automatica}, we introduce an output reference $z_{\rm\scriptscriptstyle ref}$ to be used as a further degree of freedom in the optimization problem.\\
Assuming that a reliable estimate $\hat{\mu}$ of the system gain is available (see the following Remark \ref{rem4}), we can now compute $\hat{w}$ as a function of a generic $z_{\rm\scriptscriptstyle ref}(k)$ as follows. We first compute the constant input and state values $u_{\rm\scriptscriptstyle ref}$ and $X_{\rm\scriptscriptstyle ref}$,  corresponding to the reference output $z_{\rm\scriptscriptstyle ref}$:
\begin{equation}
u_{\rm \scriptscriptstyle ref}(k)=( \hat{\mu}^{} )^{-1} z_{\rm \scriptscriptstyle ref}(k), \quad X_{\rm \scriptscriptstyle ref}(k)=Nz_{\rm \scriptscriptstyle ref}(k)
\label{eq:Xref_uref}
\end{equation}
where $N=\begin{bmatrix}
\mathbf{1}_{o}\\
\mathbf{1}_{o-1}(\hat{\mu}^{})^{-1}
\end{bmatrix}$. The value of $\hat{w}$ can now be defined in such a way that 
\begin{equation}
X_{\rm\scriptscriptstyle ref}(k)=AX_{\rm\scriptscriptstyle ref}(k)+B_1u_{\rm\scriptscriptstyle ref}(k)+M_1\hat{w}(k)
\label{eq:Xref_ss}
\end{equation} i.e., as a linear function of $z_{\rm\scriptscriptstyle ref}$. In short we write
\begin{equation}
\hat{w}(k)= \eta_{zw} z_{\rm\scriptscriptstyle ref}(k)
\label{eq:what_zref}
\end{equation}
where $\eta_{zw}=M_1^T \left[ (I_{2o-1}-A)N - B_1 (\hat{\mu}^{})^{-1} \right]$. Moreover, for consistency, the term $\hat{w}(k)$ is forced to be bounded, so that $|\hat{w}(k)| \leq \bar{w}$.

\begin{remark}
Since the long-term prediction capabilities are commonly more accurate with model \eqref{minmaxmax_orig} with the longest possible prediction horizon, i.e., $p=\bar{p}$, one suitable option for the gain estimate $\hat{\mu}$ is to choose $\hat{\mu}=\mu^{\bar{p}}$, where $$\mu^{\bar{p}}=\frac{(\hat{\theta}_{\bar{U}}^{(\bar{p})*^T}\mathbf{1}_{\bar{p}}+\hat{\theta}^{(\bar{p})*^T}_{U}\mathbf{1}_{o-1} )}{1- \hat{\theta}^{(\bar{p})*^T}_{AR}  \mathbf{1}_{o}}$$ is the gain of the optimal $\bar{p}$-steps-ahead model .
\label{rem4}
\end{remark}
Last we can define, $\forall p \in [1,\bar{p}]$, the reference for the $p$-steps ahead model, i.e.,
\begin{equation}
z_{\rm\scriptscriptstyle ref}^{p}(k)=\begin{bmatrix}C_p&D_p\end{bmatrix}
\begin{bmatrix}X_{\rm\scriptscriptstyle ref}(k)\\u_{\rm\scriptscriptstyle ref}(k)\mathbf{1}_{\bar{p}+1}\end{bmatrix}
\label{eq:zrefp}
\end{equation}
and $z_{\rm\scriptscriptstyle ref}^0(k)=z_{\rm\scriptscriptstyle ref}(k)=C_0X_{\rm\scriptscriptstyle ref}(k)+D_0u_{\rm\scriptscriptstyle ref}(k)\mathbf{1}_{\bar{p}+1}$.

The cost function to be minimized at each (sampling) time $k$ is therefore
\begin{equation}\label{eq:cost_fcn}
\begin{split}
J(k)= &\sum_{p=0}^{\bar{p}} \left( \| \bar{z}_p(k) - z_{\rm\scriptscriptstyle ref}^{p}(k) \|_{Q_p}^2 + \|\bar{u}(k+p)-u_{\rm\scriptscriptstyle ref}(k) \|_{R_p}^2 \right) \\
		&+\|\bar{X}(k+\bar{p}+1)- X_{\rm\scriptscriptstyle ref}(k)\|^2_P + \sigma \|z_{\rm\scriptscriptstyle ref}(k) - z_{\rm\scriptscriptstyle goal} \|^2
\end{split}
\end{equation}
where $\bar{X}(k+\bar{p}+1)$ is obtained by iterating the unperturbed state equation \eqref{eq:nomstate_eq} $\bar{p}+1$ times, i.e.,
\begin{equation} \bar{X}(k+\bar{p}+1)=A^{\bar{p}+1}\bar{X}(k)+\Gamma \bar{U}(k) + \Gamma_w \mathbf{1}_{\bar{p}+1}\hat{w}(k)
\label{Xhat}
\end{equation}
with $ \quad\Gamma=\begin{bmatrix}A^{\bar{p}}B_1&\dots&B_1\end{bmatrix}, \Gamma_w=\begin{bmatrix}A^{\bar{p}}M_1&\dots&M_1\end{bmatrix} $.
To compute the weights $Q_p$, $R_p$, and $P$ in order to guarantee closed-loop stability, we must first define $B=\begin{bmatrix}B_1&0_{2o-1,\bar{p}}\end{bmatrix}$ and
\begin{align*}
&\Psi=\begin{bmatrix}CA&CB+DH_1\\
\vdots&\vdots\\
C_{\bar{p}}A&C_{\bar{p}}B+D_{\bar{p}}H_1
\end{bmatrix},\,\bar{\Psi}=\begin{bmatrix}C_1&D_1\\
\vdots&\vdots\\
C_{\bar{p}}&D_{\bar{p}}\\
A^{\bar{p}+1}&\Gamma\\
0_{\bar{p}+1,2o-1}&I_{\bar{p}+1}\end{bmatrix},																	 \notag \\
&H_1=\begin{bmatrix}0_{\bar{p},1}&I_{\bar{p}}\\
0&0_{1,\bar{p}}\end{bmatrix}
\end{align*}
Also, we write $\mathcal{Q}=$diag$(Q_0,\dots,Q_{\bar{p}})$, and \break
$\bar{\mathcal{Q}}=$diag$(Q_1,\dots,Q_{\bar{p}},T_N,\mathcal{R})$, where $T_N$ is a positive definite matrix to be used as a further tuning knob and $\mathcal{R}=$diag$(R_0/2, R_1-R_0,\dots, R_{\bar{p}}-R_{\bar{p}-1})$. Then, the weighting matrices are computed such that the following constraints are satisfied:
\begin{subequations}
\label{eq:LMI_TOT}
\begin{align}
&(A+B_1K)^TP(A+B_1K)-P=-T_N-K^TR_{\overline{p}}K\label{eq:LMI1_p}\\
&\Psi^T\mathcal{Q}\Psi \leq \bar{\Psi}^T\bar{\mathcal{Q}}\bar{\Psi}\label{eq:LMI2_p}\\
&T_N>0, \quad \mathcal{R}>0, \quad \mathcal{Q}>0, \quad P>0 \label{eq:Posdef}
\end{align}
\end{subequations}
Finally,  the scalar $\sigma>0$ \eqref{eq:cost_fcn} must be chosen sufficiently large to provide converge properties, its quantitative evaluation is discussed in the Appendix.
\subsection{Definition of the tightened constraints}
As common in tube-based control, see e.g., \cite{mayne2005robust,Mayne:2009}, we enforce the input and output constraints \eqref{eq:constraints_YU} with suitable tightened bounds on the nominal input and output $\bar{u}(k)$ and $\bar{z}(k)=C\bar{X}(k)$ respectively. For their definition, we first have to define the state estimation error $\hat{e}(k)=X(k)-\hat{X}(k)$. We obtain, from \eqref{eq:model_ss} and~\eqref{eq:state_obs}, that
\begin{equation}
\hat{e}(k+1)=(A-LC)\hat{e}(k)+M_1(w(k)-\hat{w}(k) ) - Ld(k)
\label{eq:errordyn}
\end{equation}
Denote now with $\hat{\mathbb{E}}$ a robust positively invariant (RPI)  set \cite{RakovicKouramas2007} (minimal, if possible) for the system \eqref{eq:errordyn} containing $\hat{e}(0)=X(0)-\hat{X}(0)$, where $|w(k)-\hat{w}(k)| \leq 2\bar{w}$ . This guarantees that, for all $k\geq 0$, $\hat{e}(k)\in\hat{\mathbb{E}}$.\\
We also define the displacement between the estimated state and the nominal one as $\bar{e}(k)=\hat{X}(k)-\bar{X}(k)$. From~\eqref{eq:state_obs} and \eqref{eq:nomstate_eq} we derive 
\begin{equation}
\bar{e}(k+1)=(A+B_1K)\bar{e}(k)+LC\hat{e}(k)+Ld(k)
\label{eq:errordynbar}
\end{equation}
Since the equivalent disturbance $LC\hat{e}(k)+Ld(k)$ is bounded for all $k\geq 0$, we can define as $\bar{\mathbb{E}}$ the (minimal, if possible) RPI set for \eqref{eq:errordynbar}.\\
Then, the input and output constraints can be defined  with reference to the model \eqref{eq:nomstate_eq} in a tightened fashion:
\begin{equation}
\bar{u}(k)\in\bar{\mathbb{U}}, \quad
\bar{z}(k)\in\bar{\mathbb{Z}}, \quad
\hat{w}(k) \in \mathbb{W},
\label{eq:tight_constr}
\end{equation}
where $\mathbb{W}=\{w \in \mathbb{R}: |w|\leq \bar{w}\}$ and the sets $\bar{\mathbb{U}}$ and $\bar{\mathbb{Z}}$ are closed and satisfy:
\begin{subequations}
\begin{align}
\bar{\mathbb{U}}&\subseteq \mathbb{U}\ominus K\bar{\mathbb{E}}\\
\bar{\mathbb{Z}}&\subseteq \mathbb{Z}\ominus C(\bar{\mathbb{E}}\oplus \hat{\mathbb{E}}) \label{eq:constrZ}
\end{align}
\end{subequations}
Finally, to define the terminal constraint set we consider the following auxiliary control law
\begin{equation}\bar{u}(k)=u_{\rm\scriptscriptstyle ref}(k)+K(\bar{X}(k)-X_{\rm\scriptscriptstyle ref}(k))
\label{eq:nominal aux_law}
\end{equation}
To compute an invariant set where $(\bar{X}(k),z_{\rm\scriptscriptstyle ref})$ must lie in order to guarantee that constraints~\eqref{eq:tight_constr} are verified for all $k$,  we need to define the Maximal Output Admissible Set (MOAS, see \cite{gilbert1991linear}) $\mathbb{O}$ for the  system
\begin{align}
\begin{bmatrix}
\bar{X}(k+1)\\z_{\rm\scriptscriptstyle ref}(k+1)
\end{bmatrix}&=\underbrace{\begin{bmatrix}
	A+B_1K & B_1M_2+M_1\eta_{zw}\\
	0_{1,2o-1} & 1\end{bmatrix}}_{\mathcal{F}}\begin{bmatrix}
\bar{X}(k)\\z_{\rm\scriptscriptstyle ref}(k)
\end{bmatrix}\label{eq:aux_nominal system01_1}
\end{align}
that is subject to the auxiliary control law \eqref{eq:nominal aux_law}, where
$M_2=\hat{\mu}^{-1}-KN$. The triplet $(\bar{u}(k),\bar{z}(k),\hat{w}(k))$ is computed as
\begin{align}
\begin{bmatrix}
\bar{z}(k)\\\bar{u}(k) \\ \hat{w}(k)
\end{bmatrix}&=\underbrace{\begin{bmatrix}
	C&0\\
	K&M_2\\
	0_{1,2o-1} & \eta_{zw} \end{bmatrix}}_{\mathcal{C}}\begin{bmatrix}
\bar{X}(k)\\z_{\rm\scriptscriptstyle ref}(k)
\end{bmatrix}\label{eq:aux_nominal system01_out}
\end{align}
%
%
An invariant, polytopic inner approximation $\mathbb{O}_{\epsilon}$ to the MOAS can be computed in a finite number of steps as shown in \cite{gilbert1991linear}. Specifically,  $\mathbb{O}_{\epsilon}$ is defined as follows
\begin{align}
\mathbb{O}_{\epsilon}=\{(\bar{X},z_{\rm\scriptscriptstyle ref}):\mathcal{C}\mathcal{F}^k(\bar{X},z_{\rm\scriptscriptstyle ref})\in\bar{\mathbb{X}}_{\mathbb{Z}\mathbb{U}\mathbb{W}}\text{ for all }k\geq0
\nonumber\\\text{ and }\lim_{k\rightarrow +\infty}\mathcal{C}\mathcal{F}^k(\bar{X},z_{\rm\scriptscriptstyle ref})\in \bar{\mathbb{X}}_{\mathbb{Z}\mathbb{U}\mathbb{W}}(\epsilon)\}
\label{eq:O_def}
\end{align}
where $\bar{\mathbb{X}}_{\mathbb{Z}\mathbb{U}\mathbb{W}}(\epsilon)$ is a close and compact set satisfying $\bar{\mathbb{X}}_{\mathbb{Z}\mathbb{U}\mathbb{W}}(\epsilon) \oplus \mathcal{B}_{\epsilon}^{3}(0)\subseteq \bar{\mathbb{X}}_{\mathbb{Z}\mathbb{U}\mathbb{W}}$,  with $\mathcal{B}_{\epsilon}^3(0)$  a ball in $\mathbb{R}^3$ containing the origin and with radius $\epsilon$ arbitrarily small.
Note that, see again \cite{gilbert1991linear}, $\mathbb{O}_{\epsilon} \subset \mathbb{O}$ and if $(\mathcal{F},\mathcal{C})$ is observable, then $\mathbb{O}$ is bounded.\\
\subsection{The optimization problem and main result}
The optimization problem, to be solved at each time instant $k\geq 0$, reads
\begin{subequations}\label{eq:optprb}
\begin{align}
J(k|k)=\min_{\bar{X}(k),\bar{U}(k),z_{\rm\scriptscriptstyle ref}(k)}J(k)
\end{align}
subject to the dynamical system~\eqref{eq:nomstate_eq}, \eqref{eq:Xref_uref}, \eqref{eq:Xref_ss}, \eqref{eq:zrefp} and
\begin{align}
\label{eq:constraint_optE}
\hat{X}(k)-\bar{X}(k)\in\bar{\mathbb{E}}
\end{align}
Also, $\forall p \in [0,\bar{p}]$
\begin{equation}
\bar{u}(k+p)\in\bar{\mathbb{U}}, \quad \label{eq:constraint_opt}
\bar{z}(k+p)\in\bar{\mathbb{Z}}, \quad
\hat{w}(k) \in \mathbb{W}
\end{equation}
Finally, as a terminal constraint, the following must be fulfilled
\begin{align}
\begin{bmatrix}
\bar{X}(k+\bar{p}+1)\\z_{\rm\scriptscriptstyle ref}(k)
\end{bmatrix}\in \mathbb{O}_{\epsilon}\label{eq:constraint_optF}
\end{align}
\end{subequations}
If available, the solution to the optimization problem \eqref{eq:optprb} is denoted
$\bar{X}(k|k), \bar{U}(k|k)=(\bar{u}(k|k),\dots,\bar{u}(k+\overline{p}|k)),z_{\rm\scriptscriptstyle ref}(k|k)$, and $u(k)$ in \eqref{eq:controllaw} is applied to the system according to the receding horizon principle. Also, we denote with $\bar{X}(k+p|k)$ the future nominal state predictions generated using \eqref{eq:nomstate_eq} with input $\bar{U}(k|k)$. The following result holds.
\begin{theorem}
\label{thm:res1}
If the optimization problem is feasible at time step $k=0$ then it is feasible at all time steps $k>0$ and, for all $k\geq 0$, the constraints \eqref{eq:constraints_YU} are satisfied. Also, if $\sigma$ is sufficiently large, the resulting MPC control law asymptotically steers the nominal system output $\bar{z}(k)$ to the admissible set-point $z_{\rm\scriptscriptstyle goal}^{{\rm\scriptscriptstyle FEASIBLE}}$, where
\footnotesize
\begin{equation}
\begin{array}{cc}
z_{\rm\scriptscriptstyle goal}^{{\rm\scriptscriptstyle FEASIBLE}} &= \argmin {\tilde{z}} \|\tilde{z}-z_{\rm\scriptscriptstyle goal}\|^2\\
& s.t.\begin{bmatrix}\tilde{z}&\tilde{u}&\tilde{w}\end{bmatrix}^T=\begin{bmatrix} CN & \hat{\mu}^{-1} & \eta_{zw}\end{bmatrix}^T \tilde{z}\in \bar{\mathbb{X}}_{\mathbb{Z}\mathbb{U}\mathbb{W}}(\varepsilon)
\end{array}
\label{eq:yhathat}
\end{equation}
\normalsize
Finally, $\text{dist}(z(k),z_{\rm\scriptscriptstyle goal}^{{\rm\scriptscriptstyle FEASIBLE}}\oplus C(\bar{\mathbb{E}}\oplus\hat{\mathbb{E}})) \to 0$ as $k \to \infty$, where $\text{dist}(\alpha,\beta)$ denotes the distance from point $\alpha$ to set $\beta$.\hfill{}$\square$
\end{theorem}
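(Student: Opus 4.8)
The plan is to follow the standard argument for tube-based MPC for tracking, combining the robust ingredients of \cite{mayne2005robust} with the reference-as-decision-variable device of \cite{limon2008automatica}, the only genuinely new bookkeeping being that the cost \eqref{eq:cost_fcn} is built from the $\bar p+1$ \emph{independent} multi-step predictors.

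\emph{Recursive feasibility and constraint satisfaction.} Suppose \eqref{eq:optprb} is feasible at time $k$ with optimizer $(\bar X(k|k),\bar U(k|k),z_{\rm\scriptscriptstyle ref}(k|k))$. We build a candidate for $k+1$ by: keeping $z_{\rm\scriptscriptstyle ref}(k+1)=z_{\rm\scriptscriptstyle ref}(k|k)$ (hence $X_{\rm\scriptscriptstyle ref}$, $u_{\rm\scriptscriptstyle ref}$, $\hat w$ unchanged, and $\hat w(k+1)\in\mathbb W$ trivially); taking $\bar X(k+1|k+1)=\bar X(k+1|k)$; shifting the input sequence by one and appending, as last element, the auxiliary move \eqref{eq:nominal aux_law} evaluated at the previous terminal state $\bar X(k+\bar p+1|k)$; finally propagating $\bar X$ through \eqref{eq:nomstate_eq}. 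Constraint \eqref{eq:constraint_optE} holds because $\hat X(k+1)-\bar X(k+1|k)$ equals $(A+B_1K)\bar e(k)+LC\hat e(k)+Ld(k)$, i.e.\ the right-hand side of \eqref{eq:errordynbar}, with $\bar e(k)\in\bar{\mathbb E}$ by feasibility at $k$, $\hat e(k)\in\hat{\mathbb E}$, $|d(k)|\le\bar d$, so RPI of $\bar{\mathbb E}$ closes it. The tightened bounds \eqref{eq:constraint_opt} for $p=0,\dots,\bar p-1$ are inherited from the shifted part of the optimal sequence; for $p=\bar p$ they follow from $(\bar X(k+\bar p+1|k),z_{\rm\scriptscriptstyle ref}(k|k))\in\mathbb O_\epsilon$ together with the definition \eqref{eq:O_def} and the output map \eqref{eq:aux_nominal system01_out}, which guarantee that the auxiliary loop keeps $(\bar z,\bar u,\hat w)$ inside $\bar{\mathbb X}_{\mathbb Z\mathbb U\mathbb W}$. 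The terminal constraint \eqref{eq:constraint_optF} at $k+1$ holds by $\mathcal F$-invariance of $\mathbb O_\epsilon$ applied once to that same pair. Hence the candidate is feasible and feasibility propagates by induction. Constraint satisfaction on \eqref{eq:realsys} is then immediate: from \eqref{eq:controllaw}, $u(k)=\bar u(k|k)+K\bar e(k)\in\bar{\mathbb U}\oplus K\bar{\mathbb E}\subseteq\mathbb U$; and $z(k)=C\bar X(k|k)+C\bar e(k)+C\hat e(k)\in\bar{\mathbb Z}\oplus C(\bar{\mathbb E}\oplus\hat{\mathbb E})\subseteq\mathbb Z$ by \eqref{eq:constrZ}, using $\hat e(k)\in\hat{\mathbb E}$ (RPI of $\hat{\mathbb E}$ for \eqref{eq:errordyn} with $|w-\hat w|\le 2\bar w$) and $\bar e(k)\in\bar{\mathbb E}$.

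\emph{Convergence.} We take $V(k)=J(k|k)$ as a Lyapunov function on the (now invariant) feasible set. Evaluating $J$ at the candidate and subtracting $J^*(k)$, the $\sigma$-term cancels (same $z_{\rm\scriptscriptstyle ref}$) and all stage terms telescope except the coupling between multi-step predictors at consecutive times; the matrices $\Psi,\bar\Psi,H_1$ encode exactly the relation $\bar z_p(k+1)=C_pA\,\bar X(k|k)+(C_pB+D_pH_1)\bar U(k|k)$ plus the terminal correction, so that the $\mathcal Q$-weighted sum of the squared tracking errors at $k+1$ is dominated, via \eqref{eq:LMI2_p}, by the $\bar{\mathcal Q}$-weighted sum, at time $k$, of the shifted tracking errors, the input increments, and the terminal term $\|\bar X(k+\bar p+1|k)-X_{\rm\scriptscriptstyle ref}(k)\|_{T_N}^2$; the Lyapunov equation \eqref{eq:LMI1_p} then converts the latter, under the terminal controller $A+B_1K$, into the decrease of the terminal cost $\|\cdot\|_P^2$. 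The net result is $V(k+1)\le V(k)-\ell(k)$ with $\ell(k)\ge c\,\|\bar z(k)-z_{\rm\scriptscriptstyle ref}(k)\|^2$ for some $c>0$. Summability of $\ell(k)$, compactness of the set where $z_{\rm\scriptscriptstyle ref}(k)$ lives, and continuity of the optimal cost (as in \cite{limon2008automatica}) give convergence of the trajectory to a steady state $(\bar X_\infty,z_{{\rm\scriptscriptstyle ref},\infty})$ with $\bar X_\infty=Nz_{{\rm\scriptscriptstyle ref},\infty}$; in particular $\bar z(k)\to z_{{\rm\scriptscriptstyle ref},\infty}$. It remains to identify $z_{{\rm\scriptscriptstyle ref},\infty}$: suppose, for contradiction, $z_{{\rm\scriptscriptstyle ref},\infty}\ne z_{\rm\scriptscriptstyle goal}^{\rm\scriptscriptstyle FEASIBLE}$. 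Because $\mathbb O_\epsilon$ was built with the strict margin $\bar{\mathbb X}_{\mathbb Z\mathbb U\mathbb W}(\epsilon)\oplus\mathcal B_\epsilon^3(0)\subseteq\bar{\mathbb X}_{\mathbb Z\mathbb U\mathbb W}$, the steady configuration associated with $z_{\rm\scriptscriptstyle ref}=z_{{\rm\scriptscriptstyle ref},\infty}+t\,(z_{\rm\scriptscriptstyle goal}^{\rm\scriptscriptstyle FEASIBLE}-z_{{\rm\scriptscriptstyle ref},\infty})$ still satisfies all tightened constraints and lies in $\mathbb O_\epsilon$ for all small $t>0$, hence is an admissible perturbation of the limiting optimizer; along it the tracking part of $J$ (vanishing at $t=0$ since $\bar X_\infty=Nz_{{\rm\scriptscriptstyle ref},\infty}$ is a steady state of \eqref{eq:nomstate_eq}) grows by $O(t)$ with slope bounded independently of $\sigma$, whereas $\sigma\|z_{\rm\scriptscriptstyle ref}-z_{\rm\scriptscriptstyle goal}\|^2$ decreases with slope $-2\sigma\langle z_{{\rm\scriptscriptstyle ref},\infty}-z_{\rm\scriptscriptstyle goal},\,z_{\rm\scriptscriptstyle goal}^{\rm\scriptscriptstyle FEASIBLE}-z_{{\rm\scriptscriptstyle ref},\infty}\rangle<0$ (strictly negative since $z_{\rm\scriptscriptstyle goal}^{\rm\scriptscriptstyle FEASIBLE}$ is the Euclidean projection of $z_{\rm\scriptscriptstyle goal}$ on the feasible steady-state set). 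For $\sigma$ above the explicit threshold computed in the Appendix the first-order change is negative, contradicting optimality; thus $z_{{\rm\scriptscriptstyle ref},\infty}=z_{\rm\scriptscriptstyle goal}^{\rm\scriptscriptstyle FEASIBLE}$ and $\bar z(k)\to z_{\rm\scriptscriptstyle goal}^{\rm\scriptscriptstyle FEASIBLE}$. Finally, since $z(k)-\bar z(k)=C\bar e(k)+C\hat e(k)\in C(\bar{\mathbb E}\oplus\hat{\mathbb E})$ for every $k$, we obtain $\text{dist}\big(z(k),\,z_{\rm\scriptscriptstyle goal}^{\rm\scriptscriptstyle FEASIBLE}\oplus C(\bar{\mathbb E}\oplus\hat{\mathbb E})\big)\le\|\bar z(k)-z_{\rm\scriptscriptstyle goal}^{\rm\scriptscriptstyle FEASIBLE}\|\to0$.

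\emph{Main obstacle.} The crux is the descent inequality: because the stage cost mixes $\bar p+1$ \emph{independent} predictors rather than powers of one model, the usual one-line telescoping fails, and one must verify that the weight choice \eqref{eq:LMI_TOT} — precisely the role of $\Psi,\bar\Psi,\mathcal Q,\bar{\mathcal Q},T_N$ — indeed yields $V(k+1)-V(k)\le-\ell(k)$ with $\ell$ positive definite in the tracking error. A close second is making the perturbation argument fully rigorous, i.e.\ certifying that the perturbed steady state stays in $\mathbb O_\epsilon$ and bounding the tracking-cost increase uniformly in $t$, which is what pins down the quantitative lower bound on $\sigma$ invoked above.
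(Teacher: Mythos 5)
Your proposal is correct and follows essentially the same route as the paper: the same shifted candidate with the auxiliary terminal move and unchanged $z_{\rm\scriptscriptstyle ref}$ for recursive feasibility, the same inclusion arguments for constraint satisfaction, the same descent inequality built from \eqref{eq:LMI1_p}--\eqref{eq:LMI2_p}, and the same Limon-style contradiction via a convex combination of $z_{{\rm\scriptscriptstyle ref},\infty}$ and $z_{\rm\scriptscriptstyle goal}^{\rm\scriptscriptstyle FEASIBLE}$ (the paper's $\lambda$ is your $1-t$, with the explicit threshold $\sigma>\lambda_{\max}(\tilde P)$). The only cosmetic difference is that the tracking part of the perturbed cost is actually $O(t^2)$ (it is a quadratic form vanishing at $t=0$), not $O(t)$, which only strengthens your comparison.
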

\begin{proof}
	See the Appendix.\hfill$\blacksquare$
\end{proof}

\section{Simulation example}\label{sec:numerical_results}

The proposed approach for learning-based predictive control has been tested on a simulation example. The considered system is of third order, and it corresponds to the discretization of the system with continuous time transfer function
\begin{equation}
\frac{Z(s)}{U(s)}=G(s)=\frac{160}{(s+10)(s^2+1.6s+16)}
\end{equation}
characterized by dominant complex poles with natural frequency $\omega_n=4$ and damping factor $\xi=0.2$, and with unitary gain. Figure \ref{Figure_openloop} shows the open loop step response of the system under analysis.
The input and output samples are collected with sampling time $T_s=0.1$, the output $z(k)$ is corrupted by an additive disturbance $v(k)$ such that $|v(k)| \leq \bar{v}=0.01$, while the bound on the measurement noise is $\bar{d}=0.1$.
The multistep models and bounds are computed up to $\bar{p}=20$ steps ahead, while the chosen model order is $o=4$.
The collected dataset is composed overall of 1000 input-output data samples, where the input is a step-wise sequence taking a random value in $\{-1,0,1\}$ every $5$ time units.

\begin{figure}
\centering
\includegraphics[scale=0.39]{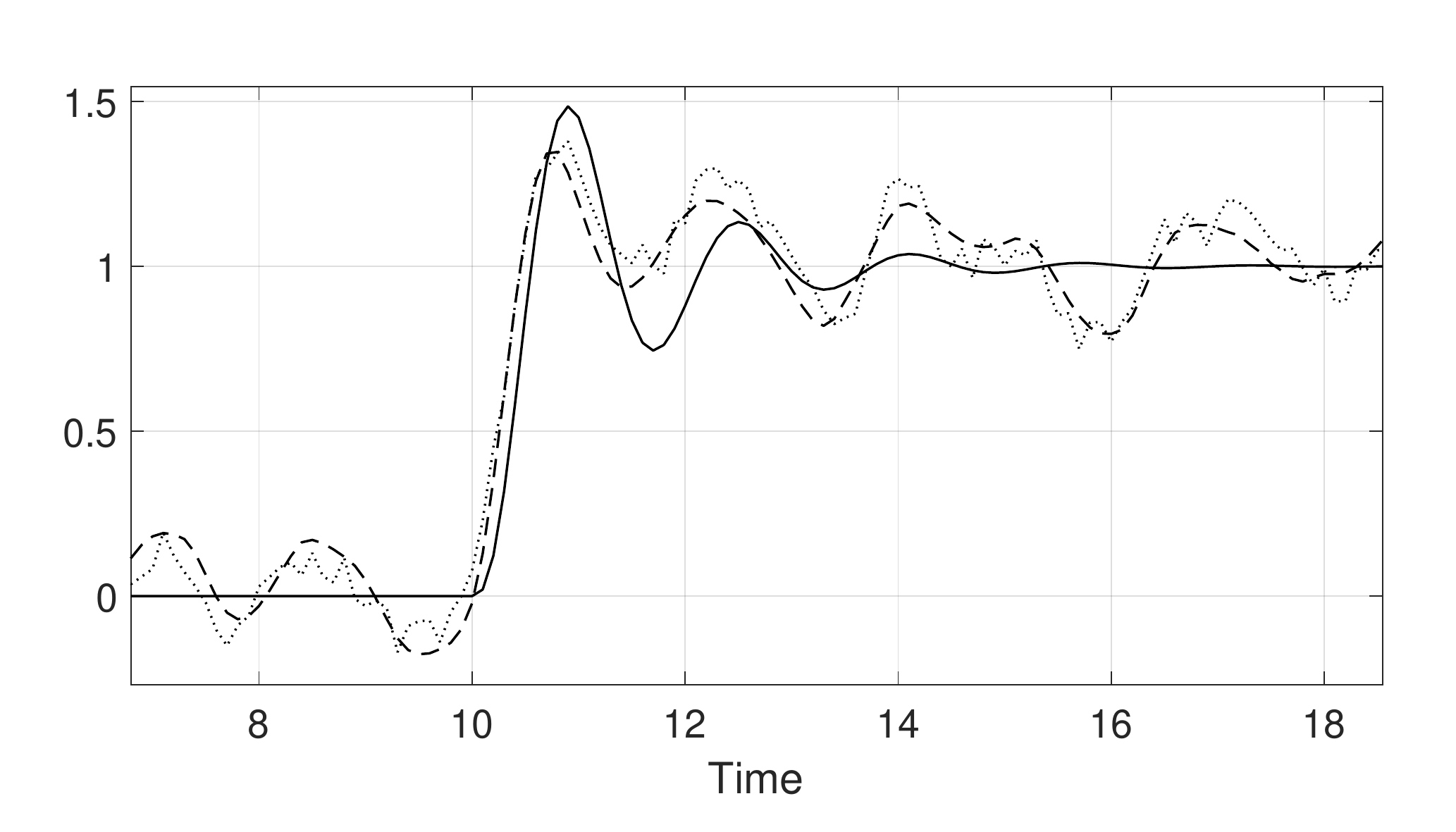}
\caption{Open loop response of the system to a unitary step at time 10. Solid line: nominal $z$ $(v(k)=d(k)=0)$, dashed line: output $z$, dotted line: output measure $y$ }
\label{Figure_openloop}
\end{figure}

Following the approach of Section \ref{SS:opt_estim}, we compute $\underline{\lambda}_p$ and monitor its trend against the percentage of dataset used to compute it. This procedure enables one to assess the convergence rate of $\underline{\lambda}_p$ to a limit value, presumably equal to $\bar{\epsilon}^*_p$ according to Theorem \ref{convergence}. Figure \ref{Figure:Lambda_dataset} shows this trend.
\begin{figure}
\centering
\includegraphics[scale=0.38]{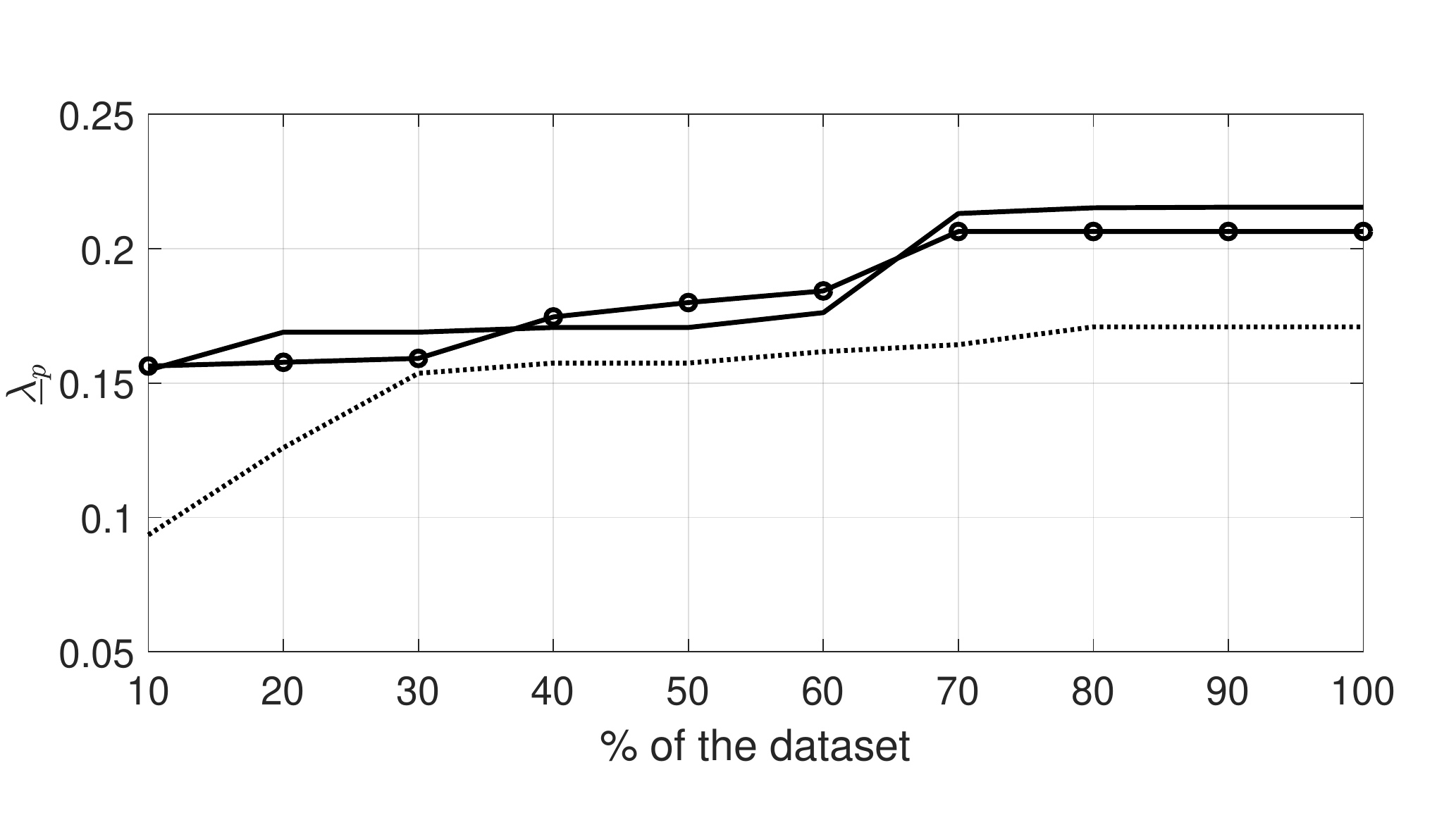}
\caption{Trend of $\underline{\lambda}_p$ against the employed percentage of the dataset. Dotted line: $p$=3, solid line: $p$=10, line with circles: $p=\bar{p}=$20 }
\label{Figure:Lambda_dataset}
\end{figure}
Parameters $\hat{\bar{\epsilon}}_p, p \in [1,\bar{p}]$ are then computed with a conservative factor $\alpha=1.1$ to account for the finite dataset employed, see \eqref{E:alpha}, and the FPSs are then built independently for each step as in \eqref{FPSdef}.
The parameters of the nominal one-step predictor $\hat{\theta}^{(1)*}$ are computed by solving \eqref{eq:theta_opt}.
In order to learn the uncertainty bound $\bar{w}$, the given predictor $\hat{\theta}^{(1)*}$ is iterated and rewritten in form \eqref{eq:predictor_k_p} and the estimated value $\hat{\tau}_p(\hat{\theta}^{(p),1})$ is computed exploiting the FPSs previously introduced. Finally the optimization program \eqref{eq:wcomputation_dknown} is solved leading to the minimizer $\bar{w}$.
\begin{figure}
\centering
\includegraphics[scale=0.39]{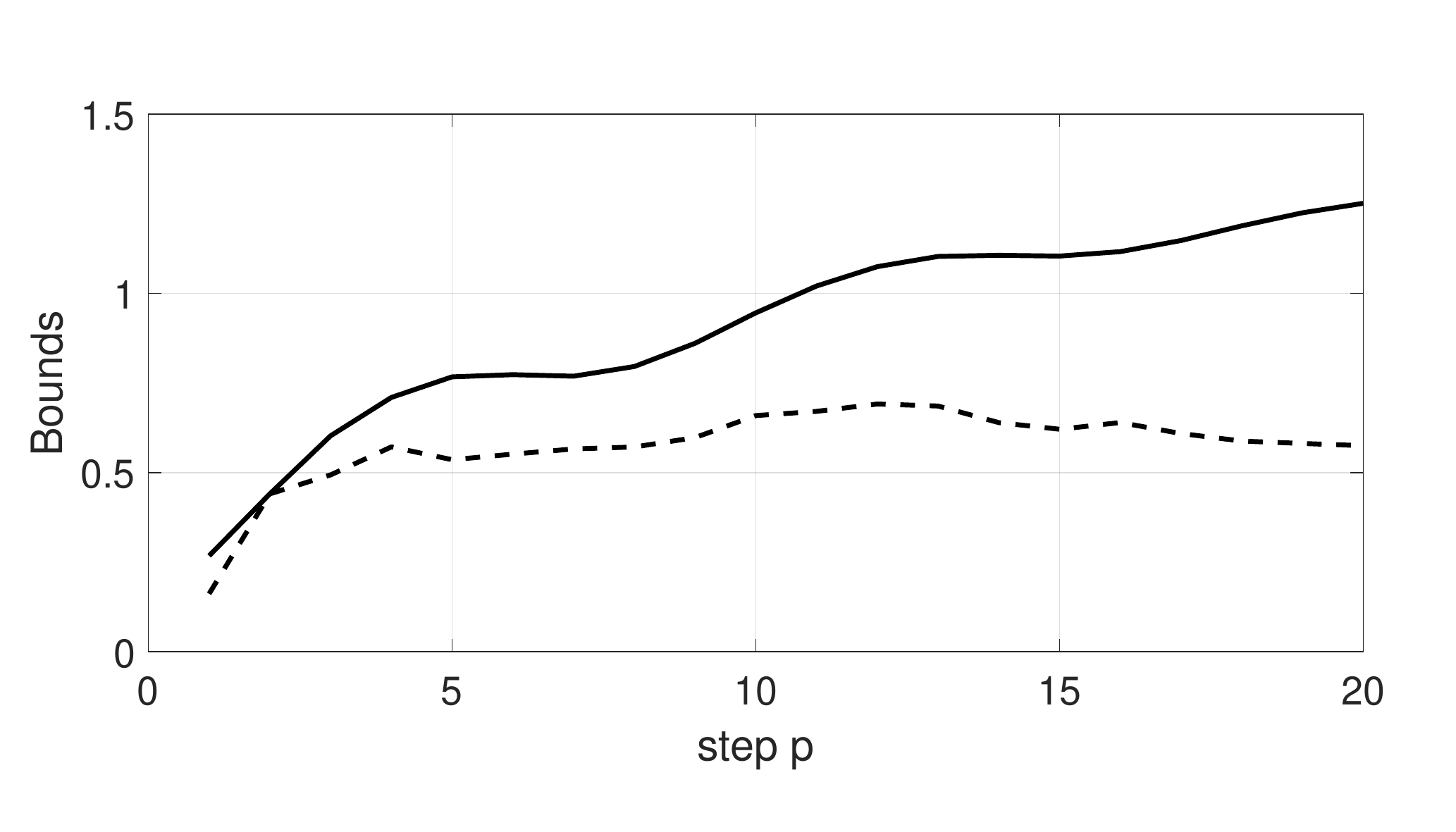}
\caption{Trend of bounds against step $p$. Solid line: bounds learned as in \eqref{eq:bound_ss_iteration} after optimizing $\bar{w}$, dashed line: multistep bounds associated to the iterated 1 step model $\hat{\tau}_p(\hat{\theta}^{(p),1})$}
\label{Figure_w_estimate_SM2}
\centering
\includegraphics[scale=0.39]{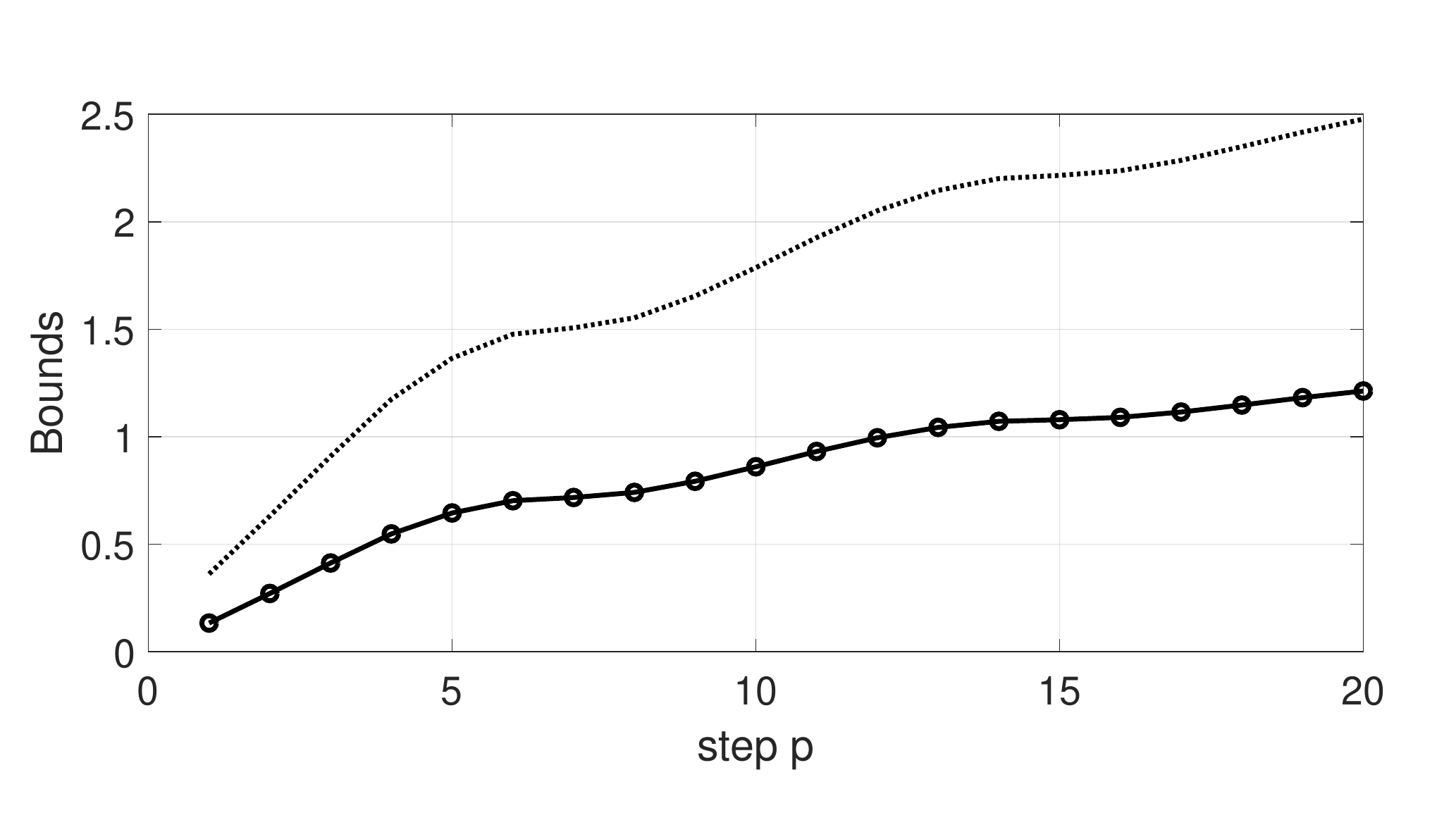}
\caption{Trend of bounds against step $p$. Dotted line: bounds with the previous approach we adopted (see \cite{TFFS18a}\cite{TFFS18b}) line with circles: current bounds for the additive disturbance $\sum_{i=0}^{p-1}| CA^iM_1 |\bar{w}$}
\label{Figure_w_estimate_SM}
\centering
\includegraphics[scale=0.39]{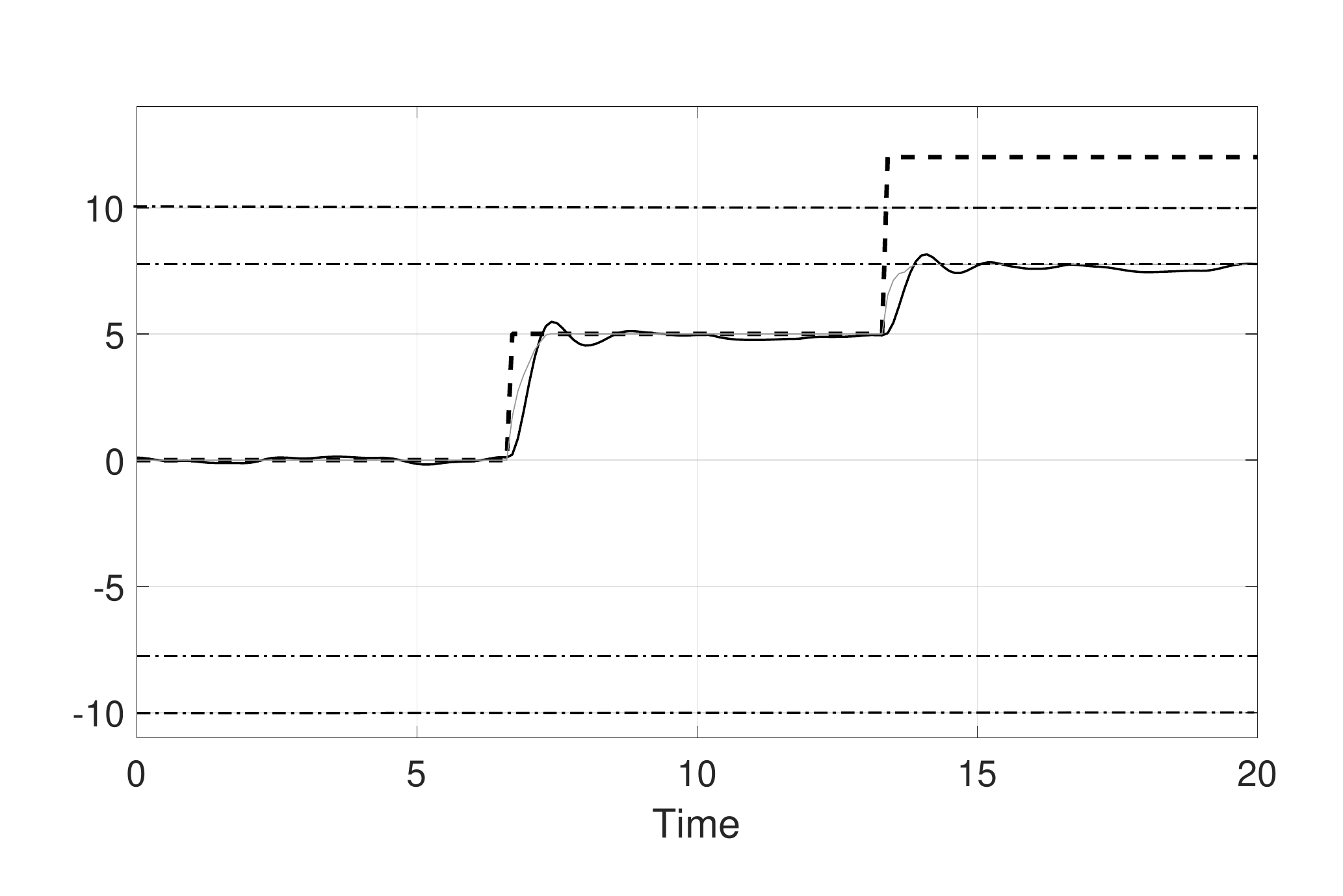}
\caption{Controlled system trajectories. Solid dark line:  $z(k)$, dashed line: reference value $z_{\rm\scriptscriptstyle goal}$, solid light line: $\bar{z}_0(k|k)$, dashed dot line: constraints and tightened constraints}
\label{Figure:closed_loop_z}
\centering
\end{figure}

The multi-step guaranteed bounds $\hat{\tau}_p(\hat{\theta}^{(p),1})$ are compared with the bounds \eqref{eq:bound_ss_iteration}, pertaining to the state-space model with matrices \eqref{eq:matrices} in Figure \ref{Figure_w_estimate_SM2}. The multi-step bounds are smaller, as expected, however they are decoupled in time, and therefore not directly usable to guarantee recursive feasibility within the proposed robust MPC law.

To clarify the advantage of our approach, in Figure \ref{Figure_w_estimate_SM} we compare the guaranteed bounds computed by iterating the obtained state-space model as described in this paper with those achieved by considering the uncertainty bound $\bar{w}=\hat{\tau}_1(\hat{\theta}^{(1)*})+\bar{d}$, i.e. the one-step-ahead guaranteed prediction error bound, iterating it over time with the same model matrices, and eventually adding $\bar{d}$. 
This alternative bound has been proposed in out previous works \cite{TFFS18a} and \cite{TFFS18b}.
It can be noted that the proposed approach achieves a guaranteed bound on the prediction error that is half the one obtained from the integration of  $\hat{\tau}_1(\hat{\theta}^{(1)*})+\bar{d}$, thus reducing conservativeness significantly.

In the control design phase, the constraint sets $ \mathbb{U}=\mathbb{Z}= [-10,10]$ are considered, while the prediction and control horizon is $\bar{p}=10$.
The Luenberger observer and the auxiliary control law are chosen thanks to optimal control theory and the weighting matrices are tuned according to \eqref{eq:LMI_TOT}. The reference to be tracked is piece-wise constant and takes value $\{0,5,12\}$, thus including an unfeasible setpoint as well.
The trajectories of the closed-loop system are reported in Figure \ref{Figure:closed_loop_z}, where it is shown that $\bar{z}_0(k|k) \to z_{\rm\scriptscriptstyle goal}$ or to its nearest feasible point. As visible from the simulations, the infeasibile reference is handled successfully by the controller as well as the transients with respect to the open loop response of the system.
\section{Conclusions} \label{sec:conclusions}
The proposed unitary approach to learning-based MPC for linear systems allows one to design a  control law based on a dataset collected from the working plant. The obtained data-driven controller is able to effectively deal with constraints and track desired output references. The method relies on two phases: model learning and model-based control design, that are conceived to limit conservativeness while still robustly guaranteeing constraint satisfaction. To achieve this result, multi-step predictors and the related uncertainty bounds are derived and exploited to compute the state-space model employed in the MPC design.
Future directions are concerned with the extension to classes of nonlinear systems, the online (adaptive) computation of the prediction models and disturbance bounds, and the direct use of multi-step predictors also in the constraint tightening scheme.
\section*{Appendix} \label{sec:proofs} 
\small
\noindent \emph{Proof of Theorem \ref{convergence}}\\
\noindent\emph{Proof of claim 1)}. The solution to \eqref{epsilon^o} must imply a regressor, denoted with $\varphi_{y,0}^{(p)}$, and a corresponding output value, $y_{p,0}$, satisfying:
\begin{equation}\label{epsilonbar0}
\begin{array}{rcl}
\bar{\epsilon}^*_p&=&|y_{p,0}-\tilde{\bar{\theta}}^{(p)^T} \varphi_{y,0}^{(p)}| - \bar{d}, \text{ so that, see \eqref{theta^o}}\\
\bar{\epsilon}^*_p&\geq&| y_{p,0}-\theta^{(p)^T}\varphi_{y,0}^{(p)}  | - \bar{d},\,\forall (\varphi_{y}^{(p)},y_p):\left[\begin{array}{c} \varphi_{y}^{(p)^T}y_p\end{array} \right]^T\in\mathcal{T}_p
\end{array}
\end{equation}
From the definition of $\mathcal{T}_p$ and $\tilde{\mathcal{T}}^{N_p}_p$ it holds that $\tilde{\mathcal{T}}^{N_p}_p \subset \mathcal{T}_p$; thus, from \eqref{E:bound_estim} neglecting the trivial case $\underline{\lambda}_p=0$, we have $
\underline{\lambda}_p = \min\limits_{\theta^{(p)} \in \Omega^{(p)}}\max\limits_{\left[\begin{array}{c} \tilde{\varphi}_{y}^{(p)^T} \tilde{y}_p\end{array} \right]^T\in\tilde{\mathcal{T}}_p^{N_p}} \lvert \tilde{y}_p - \theta^{(p)^T}\tilde{\varphi}_{y}^{(p)} \rvert - \bar{d} \leq \\
\min\limits_{\theta^{(p)} \in \Omega^{(p)}}\max\limits_{\left[\begin{array}{c} \varphi_{y}^{(p)^T} y_p\end{array} \right]^T\in\mathcal{T}_p} \lvert y_p -\theta^{(p)^T} \varphi_{y}^{(p)} \rvert - \bar{d}=\bar{\epsilon}^*_p$, thus we have then
$\underline{\lambda}_p  \leq \bar{\epsilon}_p^* $.\\
$\,$\\
\noindent\emph{Proof of claim 2)}. Starting from \eqref{E:bound_estim}, with standard properties of absolute values, we compute
\begin{equation}
\begin{split}
\underline{\lambda}_p(N_p)=&\max \biggl\{0,\min\limits_{\theta^{(p)} \in\Omega^{(p)}} \max\limits_{\left[\begin{array}{c} \tilde{\varphi}_{y}^{(p)^T} \tilde{y}_p\end{array} \right]^T\in\tilde{\mathcal{T}}_p^{N_p}}\lvert \tilde{y}_p- \theta^{(p)^T}\tilde{\varphi}_{y}^{(p)}\rvert - \bar{d}\biggr\} \geq \\
&\min\limits_{\theta^{(p)} \in\Omega^{(p)}}\left\{\lvert \bar{y}_p (N_p)-\theta^{(p)^T}\bar{\varphi}_{y}^{(p)}(N_p) \rvert - \bar{d} \right \} \text{, where}
\end{split}
\label{ineqlowerbound}
\end{equation}
\begin{equation}\nonumber
\begin{bmatrix}\bar{\varphi}_{y}^{(p)}(N_p)  \\ \bar{y}_p(N_p)\end{bmatrix}= \arg\min\limits_{\left[\begin{array}{c} \tilde{\varphi}_{y}^{(p)^T} \tilde{y}_p\end{array} \right]^T\in\tilde{\mathcal{T}}_p^{N_p}} \left\|\left[\begin{array}{c} \varphi_{y,0}^{(p)}\\ y_{p,0}\end{array} \right]-\left[\begin{array}{c} \tilde{\varphi}_{y}^{(p)}\\ \tilde{y}_p\end{array}\right]  \right\|_2
\label{eq:mindist}
\end{equation}
Adding and subtracting $y_{p,0}$ and $\theta^{(p)^T}\varphi_{y,0}^{(p)}$ from \eqref{ineqlowerbound}, and neglecting the trivial case $\underline{\lambda}_p=0$:
\begin{equation}
\begin{split}
\underline{\lambda}_p(N_p) \geq &\min\limits_{\theta^{(p)} \in\Omega^{(p)}}\{\lvert (\bar{y}_p(N_p)-y_{p,0}) + \\
&\theta^{(p)^T}( \varphi_{y,0}^{(p)}-\bar{\varphi}_{y}^{(p)}(N_p) )+ y_{p,0} - \theta^{(p)^T}\varphi_{y,0}^{(p)} \rvert - \bar{d}\} \\
\geq &\min\limits_{\theta^{(p)} \in\Omega^{(p)}}\{\lvert  y_{p,0} -\theta^{(p)^T}\varphi_{y,0}^{(p)}\rvert - \lvert(-\bar{y}_p(N_p)+y_{p,0}) - \\
&\theta^{(p)^T}(\varphi_{y,0}^{(p)}-\bar{\varphi}_{y}^{(p)}(N_p)) \rvert - \bar{d}\} \\
 =&\min\limits_{\theta^{(p)} \in\Omega^{(p)}}\{\lvert  y_{p,0} - \theta^{(p)^T}\varphi_{y,0}^{(p)} \rvert \}-\max\limits_{\theta^{(p)} \in\Omega^{(p)}} \{\lvert (-\bar{y}_p(N_p) + y_{p,0}) +\\
& \theta^{(p)^T}(-\varphi_{y,0}^{(p)}+\bar{\varphi}_{y}^{(p)}(N_p)) \rvert - \bar{d}\} \\
 =&\bar{\epsilon}_p^* + \bar{d}-
\max\limits_{\theta^{(p)} \in\Omega^{(p)}} \{\lvert (-\bar{y}_p(N_p) + y_{p,0}) +\\
& \theta^{(p)^T}(-\varphi_{y,0}^{(p)}+\bar{\varphi}_{y}^{(p)}(N_p)) \rvert - \bar{d}\}
\end{split}
\end{equation}
Then, simplifying $\bar{d}$, we compute
\begin{equation}
\begin{split}
\underline{\lambda}_p(N_p) \geq \bar{\epsilon}_p^* - \max\limits_{\theta^{(p)} \in\Omega^{(p)}} &\{\lvert (-\bar{y}_p(N_p) + y_{p,0}) \\
&+\theta^{(p)^T} (-\varphi_{y,0}^{(p)}+\bar{\varphi}_{y}^{(p)}(N_p)) \rvert\}
\end{split}
\end{equation}
Considering Assumption \ref{A:data_set}, we know that for $N_p \to \infty$
\begin{equation}
\forall \beta>0, \exists \bar{N}_p(\beta): \left\| \begin{array}{c}\bar{\varphi}_{y}^{(p)}(\bar{N_p})-\varphi_{y,0}^{(p)}\\ \bar{y}_{p}(\bar{N}_p)-y_{p,0}\end{array} \right\|\leq \beta, \text{in particular}\end{equation}
\begin{equation}
\|\bar{\varphi}_{y}^{(p)}(\bar{N_p})-\varphi_{y,0}^{(p)}\|\leq \beta \text{ and } \|\bar{y}_{p}(\bar{N}_p)-y_{p,0}\|\leq \beta
\label{eq:dist_points}
\end{equation}
and thus, using the inequality $\lvert a^Tb\rvert \leq \|a\|_2\|b\|_2$
\begin{equation}\nonumber
\begin{array}{ll}
\underline{\lambda}_p(\bar{N}_p(\beta)) &\geq\bar{\epsilon}_p^* - \max\limits_{\theta^{(p)} \in\Omega^{(p)}}\{\lvert (-\bar{y}_p(\bar{N}_p(\beta)) + y_{p,0}) \rvert \\
&+\|\theta^{(p)} \|_2  \|\varphi_{y,0}^{(p)}-\bar{\varphi}_{y}^{(p)}(\bar{N}_p(\beta)) \|_2 \} \\
 &\geq\bar{\epsilon}_p^* - \beta \left(1+ \max\limits_{\theta^{(p)} \in \Omega^{(p)}} \|\theta^{(p)}\|_2 \right)
\label{eq:lowerbound}
\end{array}
\end{equation}
Finally, choosing $\beta \leq \frac{\rho}{\left(1+ \max\limits_{\theta^{(p)} \in \Omega^{(p)}} \|\theta^{(p)}\|_2\right) } $ concludes the proof.

\noindent \emph{Proof of Lemma \ref{convergence2}}\\
The proof of claim 1 is very similar to that of claim 1 of Theorem \ref{convergence} and thus omitted for brevity. Regarding the second claim, we note that the definition of $\tau_p(\theta^{(p)})$ must imply a vector, denoted $\varphi_{y,0}^{(p)} \in \Phi^{(p)}$, that satisfies
\begin{equation}\nonumber
\tau_p(\theta^{(p)})-\hat{\bar{\epsilon}}_p=\max\limits_{\theta \in \Theta^{(p)}} |(\theta-\theta^{(p)})^T\varphi_{y,0}^{(p)}|
\end{equation}
Let us denote $\bar{\varphi}_{y}^{(p)}(N_p)= \arg\min\limits_{\begin{bmatrix}\tilde{\varphi}_{y}^{(p)^T}  \tilde{y}_p \end{bmatrix}^T \in \tilde{\mathcal{T}}_p^{N_p}} \| \tilde{\varphi}_{y}^{(p)}-\varphi_{y,0}^{(p)}\|_2$ and
\begin{equation}\nonumber
\begin{array}{cl}
\underline{\tau}_p(\theta^{(p)})-\hat{\bar{\epsilon}}_p &= \max\limits_{\begin{bmatrix}\tilde{\varphi}_{y}^{(p)^T}  \tilde{y}_p \end{bmatrix}^T \in \tilde{\mathcal{T}}_p^{N_p}} \max\limits_{\theta \in \Theta^{(p)}} |( \theta - \theta^{(p)})^T\tilde{\varphi}_{y}^{(p)}| \\
& \geq \max\limits_{\theta \in \Theta^{(p)}} |(\theta - \theta^{(p)})^T\bar{\varphi}_{y}^{(p)}(N_p)|
\end{array}
\end{equation}
By adding and subtracting $(\theta-\theta^{(p)^T})\varphi_{y,0}^{(p)}$, one can write
\begin{equation}\nonumber
\begin{array}{cl}
\underline{\tau}_p(\theta^{(p)})-\hat{\bar{\epsilon}}_p\geq &\\
\max\limits_{\theta \in \Theta^{(p)}} |(\theta-\theta^{(p)})^T \varphi_{y,0}^{(p)}-(\theta-\theta^{(p)})^T(\varphi_{y,0}^{(p)}-\bar{\varphi}_{y}^{(p)}(N_p)) | \geq \\
\max\limits_{\theta \in \Theta^{(p)}} |(\theta-\theta^{(p)})^T \varphi_{y,0}^{(p)}| - \max\limits_{\theta \in \Theta^{(p)}}|(\theta-\theta^{(p)})^T(\varphi_{y,0}^{(p)}-\bar{\varphi}_{y}^{(p)}(N_p))|
\end{array}
\end{equation}
Recall from Assumption \ref{A:data_set} that for $N_p \to \infty$
\begin{equation} \nonumber
\forall \beta>0, \exists \bar{N}_p(\beta) : \|\bar{\varphi}_{y}^{(p)}(\bar{N}_p(\beta))-\varphi_{y,0}^{(p)}\| \leq \beta
\end{equation}
and, using the property $|a^Tb| \leq \|a\|_2\|b\|_2$, also replacing $\max\limits_{\theta \in \Theta^{(p)^T}} |(\theta-\theta^{(p)}) \varphi_{y,0}^{(p)}|=\tau_p(\theta^{(p)})-\hat{\bar{\epsilon}}_p$, we eventually obtain:
\begin{equation}\nonumber
\begin{array}{cl}
\underline{\tau}_p(\theta^{(p)}) & \geq \tau_p(\theta^{(p)}) - \max\limits_{\theta \in \Theta^{(p)}}\|(\varphi_{y,0}^{(p)}-\bar{\varphi}_{y}^{(p)}(N_p))\|_2 \|(\theta-\theta^{(p)})\|_2\\
& \geq \tau_p(\theta^{(p)})  - \beta \max\limits_{\theta \in \Theta^{(p)}} \|\theta - \theta^{(p)}\|_2
\end{array}
\end{equation}
Now, taking any $\rho \in (0,\tau_p(\theta^{(p)}))$ and choosing $\beta \leq \frac{\rho}{\max\limits_{\theta \in \Theta^{(p)}} \| \theta  - \theta^{(p)}\|_2}$ we have $\underline{\tau}_p(\theta^{(p)}) \geq \tau_p(\theta^{(p)})-\rho$, which concludes the proof.

\noindent \emph{Proof of Theorem~\ref{thm:res1}}\\
The proof of Theorem \ref{thm:res1} is here divided into the following steps:
\begin{itemize}
	\item Proof of recursive feasibility of the optimization problem \eqref{eq:optprb}.
	\item Proof that constraints \eqref{eq:constraints_YU} are satisfied.
	\item Proof of convergence.
\end{itemize}
\emph{Recursive feasibility.}\\
The proof is conducted by induction. Assume that, at instant $k$, a solution to the optimization problem \eqref{eq:optprb} exists. All constraints \eqref{eq:constraint_optE}-\eqref{eq:constraint_optF} are therefore verified by $z_{\rm\scriptscriptstyle ref}(k|k)$, the trajectories $\bar{X}(k+p|k)$, and $\bar{U}(k|k)=(\bar{u}(k|k),\dots,\bar{u}(k+\overline{p}|k))$. The input $u(k)$, actually applied to the system at time step $k$, is defined according to \eqref{eq:controllaw}.\\
At step $k+1$, $X(k+1)=AX(k)+B_1u(k)+M_1w(k)$ and $\hat{X}(k+1)=A\hat{X}(k)+B_1u(k)+M_1\hat{w}(k)+L(y(k)-C\hat{X}(k))$. 
We can show that a feasible, although possibly suboptimal, solution to \eqref{eq:optprb} at step $k+1$ can be defined as $
z_{\rm\scriptscriptstyle ref}(k+1|k)=z_{\rm\scriptscriptstyle ref}(k|k)$, $\bar{X}(k+1|k)$,$$\bar{U}(k+1|k)=\left( \bar{u}(k+1|k), \dots, \bar{u}(k+\overline{p}|k), u_{\rm\scriptscriptstyle ref}(k|k)+K(\bar{X}(k+\overline{p}+1|k)-X_{\rm\scriptscriptstyle ref}(k|k)) \right).$$
First of all, in view of the invariance of $\bar{\mathbb{E}}$,
$ \hat{X}(k+1)-\bar{X}(k+1|k)=(A+B_1K)(\hat{X}(k)-\bar{X}(k|k))+LC\hat{e}(k)+Ld(k)\in (A+B_1K)\bar{\mathbb{E}}\oplus LC\hat{\mathbb{E}}\oplus \mathbb{D}\subseteq \bar{\mathbb{E}}$, where $\mathbb{D}=[-\bar{d},\bar{d}]$.
Also, $\bar{u}(k+p|k) \in \bar{\mathbb{U}}$ in view of \eqref{eq:constraint_opt}, for all $p=1, \dots, \overline{p}$; also, $u_{\rm\scriptscriptstyle ref}(k|k)+K(\bar{X}(k+\overline{p}+1|k)-X_{\rm\scriptscriptstyle ref}(k|k)) \in \bar{\mathbb{U}}$ in view of \eqref{eq:constraint_optF} and of the definition of $\mathbb{O}_{\epsilon}$. Last $\hat{w}(k|k) \in \mathbb{W}$ at time k ensures $\hat{w}(k+1|k) \in \mathbb{W}$ since $z_{\rm \scriptscriptstyle ref}(k+1|k)=z_{\rm \scriptscriptstyle ref}(k|k)$ and \eqref{eq:what_zref}. \\
Thirdly, $C\bar{X}(k+p|k) \in \bar{\mathbb{Z}}$ for all $p=1,\dots,\overline{p}$ in view of \eqref{eq:constraint_opt} and $C\bar{X}(k+\overline{p}+1|k) \in \bar{\mathbb{Z}}$ in view of \eqref{eq:constraint_optF} and of the definition of $\mathbb{O}_{\epsilon}$. Finally, it holds that $$\begin{bmatrix}\bar{X}(k+\overline{p}+2|k)\\z_{\rm\scriptscriptstyle ref}(k+1|k)\end{bmatrix}
=\mathcal{F}\begin{bmatrix}\bar{X}(k+\overline{p}+1|k)\\z_{\rm\scriptscriptstyle ref}(k|k)\end{bmatrix}\in \mathbb{O}_{\epsilon}$$
in view of \eqref{eq:constraint_optF} and of the positive invariance of $\mathbb{O}_{\epsilon}$.
Since feasibility holds by assumption at time $k=0$ then, by induction, it is guaranteed also for all $k>0$.
\vspace{2 mm}\\
\noindent
\emph{Constraint satisfaction.}\\
In view of the feasibility of the problem \eqref{eq:optprb} at any time instant $k\geq0$, it results that constraints \eqref{eq:constraint_optE}-\eqref{eq:constraint_opt} are verified. Therefore, for all $k\geq 0$, from \eqref{eq:controllaw} $u(k)=\bar{u}(k|k)+K(\hat{X}(k)-\bar{X}(k|k))\in \bar{\mathbb{U}} \oplus K\bar{\mathbb{E}}\subseteq\mathbb{U}$, proving \eqref{eq:constraints_U}. Also, $z(k)=C X(k)=C\bar{X}(k|k)+C(\hat{X}(k)-\bar{X}(k|k))+C({X}(k)-\hat{X}(k|k))\in \bar{\mathbb{Z}}\oplus  C_0\bar{\mathbb{E}}\oplus  C_0\hat{\mathbb{E}}\subseteq \mathbb{Z}$, proving \eqref{eq:constraints_Y}.

\noindent
\emph{Convergence.}\\
We compute, from \eqref{eq:cost_fcn}, that
\begin{equation}
\begin{array}{ll}
J(k|k)&=\sum_{p=0}^{\overline{p}}\|\begin{bmatrix}C_p&D_p\end{bmatrix}\begin{bmatrix}\bar{X}(k|k)\\\bar{U}(k|k)\end{bmatrix}  - z_{\rm\scriptscriptstyle ref}^{p}(k)\|^2_{Q_p}\\
&+\|\bar{u}(k+p|k) - \textit{u}_{\rm\scriptscriptstyle ref}(k|k)\|^2_{R_p}\\
&+\|\bar{X}(k+\overline{p}+1|k) - X_{\rm\scriptscriptstyle ref}(k|k)\|^2_P + \sigma\| \textit{z}_{\rm\scriptscriptstyle ref}(k|k) - \textit{z}_{\rm\scriptscriptstyle goal} \|^2
\end{array}
\label{eq:Jkk}
\end{equation}
At step $k+1$, the optimal cost function is $J(k+1|k+1)\leq J(k+1|k)$, where $J(k+1|k)$ is the cost obtained if the feasible solution $\bar{X}(k+1|k), \bar{U}(k+1|k),\textit{z}_{\rm\scriptscriptstyle ref}(k|k)$ is applied. We compute that
\begin{equation}
\begin{array}{ll}
J(k+1|k)&=\sum_{p=0}^{\overline{p}}\|\begin{bmatrix}C_p&D_p\end{bmatrix}\begin{bmatrix}\bar{X}(k+1|k)\\\bar{U}(k+1|k)\end{bmatrix}  - z_{\rm\scriptscriptstyle ref}^p(k|k)\|^2_{Q_p}\\
&+\|\bar{u}(k+p+1|k) - \textit{u}_{\rm\scriptscriptstyle ref}(k|k)\|^2_{R_p}\\
&+\|\bar{X}(k+\overline{p}+2|k) - X_{\rm\scriptscriptstyle ref}(k|k)\|^2_P + \sigma\| \textit{z}_{\rm\scriptscriptstyle ref}(k|k) - \textit{z}_{\rm\scriptscriptstyle goal} \|^2
\end{array}
\label{eq:Jk1k}
\end{equation}
where $\bar{u}(k+\overline{p}+1|k)= u_{\rm\scriptscriptstyle ref}(k|k)+K\left( \bar{X}(k+\overline{p}+1|k)-X_{\rm\scriptscriptstyle ref}(k|k) \right)$.
From \eqref{eq:Jkk} and \eqref{eq:Jk1k},
$J(k+1|k+1)-J(k|k)\leq J(k+1|k)-J(k|k)\leq-(\|\bar{z}_0(k|k) - z_{\rm\scriptscriptstyle ref}(k|k)\|^2_{Q_0}+\|\bar{u}(k|k) - u_{\rm\scriptscriptstyle ref}(k|k)\|^2_{R_0})
\\+\sum_{p=0}^{\overline{p}-1}(\|\begin{bmatrix}C_p&D_p\end{bmatrix}\begin{bmatrix}\bar{X}(k+1|k)\\\bar{U}(k+1|k)\end{bmatrix} - z_{\rm\scriptscriptstyle ref}^{p}(k|k)\|^2_{Q_p}\break -
\|\begin{bmatrix}C_{p+1}&D_{p+1}\end{bmatrix}\begin{bmatrix}\bar{X}(k|k)\\\bar{U}(k|k)\end{bmatrix} - z_{\rm\scriptscriptstyle ref}^{p+1}(k|k)\|^2_{Q_{p+1}}+\|\bar{u}(k+1+p|k) - u_{\rm\scriptscriptstyle ref}(k|k)\|^2_{R_p}-\|\bar{u}(k+p+1|k) - u_{\rm\scriptscriptstyle ref}(k|k)\|^2_{R_{p+1}})+\|\begin{bmatrix}C_{\overline{p}}&D_{\overline{p}}\end{bmatrix}\begin{bmatrix}\bar{X}(k+1|k)\\
\bar{U}(k+1|k)\end{bmatrix} - {z}_{\rm\scriptscriptstyle ref}^{\bar{p}}(k)\|^2_{Q_{\overline{p}}}
+\|\bar{u}(k+\bar{p}+1|k)-u_{\rm\scriptscriptstyle ref}(k|k)\|^2_{R_{\overline{p}}}
-\|\bar{X}(k+1+{\overline{p}}|k) - X_{\rm\scriptscriptstyle ref}(k|k)\|^2_{P}+\|(A+B_1K)(\hat{X}(k+1+{\overline{p}}|k) - X_{\rm\scriptscriptstyle ref}(k|k) )\|^2_{P})$, where we used $\bar{X}(k+\bar{p}+2|k)-X_{\rm\scriptscriptstyle ref}(k|k)=(A+B_1K)(\bar{X}(k+1+{\overline{p}}|k) - X_{\rm\scriptscriptstyle ref}(k|k))$.\\
%
We can write that $\bar{U}(k+1|k)=H_1\bar{U}(k|k)+H_2 ( u_{\rm\scriptscriptstyle ref}(k|k) + K(\bar{X}(k+\overline{p}+1|k) - X_{\rm\scriptscriptstyle ref}(k|k) ) )$ and that $\mathbf{1}_{\bar{p}+1}u_{\rm\scriptscriptstyle ref}(k|k)=H_1\mathbf{1}_{\bar{p}+1}u_{\rm\scriptscriptstyle ref}(k|k) + H_2 u_{\rm\scriptscriptstyle ref}(k|k)$, being $H_2=\begin{bmatrix}0_{1,\overline{p}}&1\end{bmatrix}^T$. Also, $\bar{X}(k+1+{\overline{p}}|k) - X_{\rm\scriptscriptstyle ref}(k|k) = [A^{\bar{p}+1}\quad  \Gamma]\begin{bmatrix}\bar{X}(k|k) -X_{\rm\scriptscriptstyle ref}(k|k)\\\bar{U}(k|k)-\mathbf{1}_{\bar{p}+1}u_{\rm\scriptscriptstyle ref}(k|k)\end{bmatrix}$.
In view of this we can write
\begin{equation}
\begin{split}
&\begin{bmatrix}\bar{X}(k+1|k)-X_{\rm\scriptscriptstyle ref}(k|k)\\\bar{U}(k+1|k)-\mathbf{1}_{\bar{p}+1}u_{\rm\scriptscriptstyle ref}(k|k)\end{bmatrix}=\\
&\begin{bmatrix}A&B\\H_2 KA^{\overline{p}+1}&H_1+H_2 K\Gamma
\end{bmatrix}\begin{bmatrix}\bar{X}(k|k) -X_{\rm\scriptscriptstyle ref}(k|k)\\\bar{U}(k|k)-\mathbf{1}_{\bar{p}+1}u_{\rm\scriptscriptstyle ref}(k|k)\end{bmatrix}
\end{split}
\end{equation}
For notational simplicity, we define
$$\bar{\xi}(k|k)=\begin{bmatrix}\bar{X}(k|k) - X_{\rm\scriptscriptstyle ref}(k|k)\\\bar{U}(k|k) - \mathbf{1}_{\bar{p}+1}u_{\rm\scriptscriptstyle ref}(k|k)\end{bmatrix}$$
Defining $p\geq 1$, $\Delta R_p=R_p-R_{p-1}$ and recalling that $D_pH_2=0$, for all $p=1,\dots,\overline{p}$, we can write that
$J(k+1|k+1)-J(k|k) \leq \\
-(\|\bar{z}_0(k|k)-z_{\rm\scriptscriptstyle ref}(k|k)\|^2_{Q_0}+\|\bar{u}(k|k)-u_{\rm\scriptscriptstyle ref}(k|k)\|^2_{R_0/2})\\
+\sum_{p=0}^{\overline{p}-1} \biggl( \| \begin{bmatrix}C_pA&C_pB+D_pH_1\end{bmatrix}\bar{\xi}(k|k)\|_{Q_p}^2 -\\
\|\begin{bmatrix}C_{p+1}&D_{p+1}\end{bmatrix}\bar{\xi}(k|k)\|^2_{Q_{p+1}} \biggr)- \|\begin{bmatrix}0&I_{\overline{p}+1}\end{bmatrix}\bar{\xi}(k|k)\|^2_{\mathcal{R}}\\
+\|\begin{bmatrix}C_{\overline{p}}A&C_{\overline{p}}B+D_{\overline{p}}H_1\end{bmatrix}\bar{\xi}(k|k)\|_{Q_{\overline{p}}}^2\\
+ \|\bar{X}(k+1+{\overline{p}}|k) - X_{\rm\scriptscriptstyle ref}(k|k)\|^2_{K^TR_{\bar{p}}K+(A+B_1K)^TP(A+B_1K)-P}$.
From \eqref{eq:LMI1_p} we obtain that
$J(k+1|k+1)-J(k|k) \leq -(\|\bar{z}_0(k|k)-z_{\rm\scriptscriptstyle ref}(k|k)\|^2_{Q_0}+\|\bar{u}(k|k)-u_{\rm\scriptscriptstyle ref}(k|k)\|^2_{R_0/2})\\
+\sum_{p=0}^{\overline{p}-1} \biggl( \| \begin{bmatrix}C_pA&C_pB+D_pH_1\end{bmatrix}\bar{\xi}(k|k)\|_{Q_p}^2  -\\
\|\begin{bmatrix}C_{p+1}&D_{p+1}\end{bmatrix}\bar{\xi}(k|k)\|^2_{Q_{p+1}} \biggr)
-\|\begin{bmatrix}0_{\bar{p}+1,2o-1}&I_{\overline{p}+1}\end{bmatrix}\bar{\xi}(k|k)\|^2_{\mathcal{R}}
+\| \begin{bmatrix}C_{\overline{p}}A&C_{\overline{p}}B+D_{\overline{p}}H_1\end{bmatrix}\bar{\xi}(k|k)\|^2_{Q_{\overline{p}}}-\|\bar{X}(k+1+{\overline{p}}|k) - X_{\rm\scriptscriptstyle ref}(k|k)\|^2_{T_N}$.\\
We can write the latter in compact form as
\begin{equation}
\begin{split}
&J(k+1|k+1)-J(k|k) \leq  -(\|\bar{z}_0(k|k)-z_{\rm\scriptscriptstyle ref}(k|k)\|^2_{Q_0}\\
&+\|\bar{u}(k|k)-u_{\rm\scriptscriptstyle ref}(k|k)\|^2_{R_0/2})+
\|\bar{\xi}(k|k)\|^2_{\Omega}
\end{split}
\label{eq:iss_form}
\end{equation}
where $\Omega= \Psi^T\mathcal{Q}\Psi-\bar{\Psi}^T\bar{\mathcal{Q}}\bar{\Psi}$.
In view of \eqref{eq:LMI2_p}
\begin{equation}
\begin{split}
J(k+1|k+1)&-J(k|k) \leq  -(\|\bar{z}_0(k|k)-z_{\rm\scriptscriptstyle ref}(k|k)\|^2_{Q_0}\\
&+\|\bar{u}(k|k)-u_{\rm\scriptscriptstyle ref}(k|k)\|^2_{R_0/2})
\end{split}
\label{eq:iss_form2}
\end{equation}
In view of \eqref{eq:iss_form2} then, asymptotically, $\|\bar{z}_0(k|k)-z_{\rm\scriptscriptstyle ref}(k|k)\|^2_{Q_0}+\|\bar{u}(k|k)-u_{\rm\scriptscriptstyle ref}(k|k)\|^2_{R_0/2}\rightarrow 0$ as $k \to +\infty$. \medskip\\
In the final part of the proof we show that, similarly to~\cite{limon2008automatica} the only asymptotic solution compatible with $\|\bar{z}_0(k|k)-z_{\rm\scriptscriptstyle ref}(k|k)\|^2_{Q_0}=0$ and $\|\bar{u}(k|k)-u_{\rm\scriptscriptstyle ref}(k|k)\|^2_{R_0/2}= 0$ is the one corresponding to $z_{\rm\scriptscriptstyle ref}(k|k)=z_{\rm\scriptscriptstyle goal}^{{\rm\scriptscriptstyle FEASIBLE}}$.
Similarly to \cite{limon2008automatica} we proceed by contradiction.

Preliminarly we highlight a property of the MOAS, specifically that the definition \eqref{eq:O_def} implies, for all $\bar{X}$,
$\lim\limits_{k\rightarrow +\infty}\mathcal{C}\mathcal{F}^k(\bar{X},z_{\rm\scriptscriptstyle ref})=\lim\limits_{k\rightarrow +\infty}\mathcal{C}\mathcal{F}^k(X_{\rm\scriptscriptstyle ref},z_{\rm\scriptscriptstyle ref})=\mathcal{C}(X_{\rm\scriptscriptstyle ref},z_{\rm\scriptscriptstyle ref})=(z_{\rm\scriptscriptstyle ref},u_{\rm\scriptscriptstyle ref},\hat{w}_{ref})$
where
$$\begin{bmatrix}X_{\rm\scriptscriptstyle ref}& u_{\rm\scriptscriptstyle ref} & \hat{w}_{ref} \\ \end{bmatrix}^T=\begin{bmatrix} N & \hat{\mu}^{-1} &  \eta_{zw}\end{bmatrix}^Tz_{\rm\scriptscriptstyle ref}$$
$$\text{and so }\lim_{k\rightarrow +\infty}\mathcal{C}\mathcal{F}^k(\bar{X},z_{\rm\scriptscriptstyle ref})=\begin{bmatrix} CN & \hat{\mu}^{-1} & \eta_{zw}\end{bmatrix}^Tz_{\rm\scriptscriptstyle ref}$$ Therefore, we assume by contradiction that $z_{\rm\scriptscriptstyle ref}^{\infty}\neq z_{\rm\scriptscriptstyle goal}^{{\rm\scriptscriptstyle FEASIBLE}}$,
where
$$\begin{bmatrix} CN & \hat{\mu}^{-1} & \eta_{zw}\end{bmatrix}^Tz_{\rm\scriptscriptstyle ref}^{\infty}\in\bar{\mathbb{X}}_{\mathbb{ZUW}}(\varepsilon)$$
is the steady-state solution where the output $\bar{z}_0(k|k)$ converges. In steady state, $\bar{z}_0(k|k) = z_{\rm\scriptscriptstyle ref}^{\infty}$, corresponding to the state and input values
$$\begin{bmatrix}\bar{X}(k)\\\bar{u}(k)\end{bmatrix}=\begin{bmatrix}{X}_{\rm\scriptscriptstyle ref}^{\infty}\\{u}_{\rm\scriptscriptstyle ref}^{\infty}\end{bmatrix}=\begin{bmatrix}N\\ \hat{\mu}^{-1}\end{bmatrix}{z}^{\infty}_{\rm\scriptscriptstyle ref}$$
Note that such steady-state condition is feasible for \eqref{eq:optprb}, since all constraints are verified, i.e., \eqref{eq:constraint_optE}-\eqref{eq:constraint_opt}. The corresponding value of the cost function \eqref{eq:Jkk} is $J_1=\sigma \|z_{\rm\scriptscriptstyle ref}^{\infty}-z_{\rm\scriptscriptstyle goal} \|^2$.\\
Consider now an alternative solution (starting, at time $k$, from the previously-defined steady-state) to problem~\eqref{eq:optprb}, i.e., the triple
$(\bar{X}(k),\bar{U}(k),z_{\rm\scriptscriptstyle ref}(k))$ where
the initial condition $\bar{X}(k)$, compatible with constraint \eqref{eq:constraint_optE}, is $\bar{X}(k)={X}_{\rm\scriptscriptstyle ref}^{\infty}$ and
the reference output is
\begin{equation}
{z}_{\rm\scriptscriptstyle ref}=z_{\rm\scriptscriptstyle ref}^{\infty}\lambda + (1-\lambda)z_{\rm\scriptscriptstyle goal}^{{\rm\scriptscriptstyle FEASIBLE}}
\label{eq:ytilde_ref}
\end{equation}
that, in passing, corresponds to the following reference values for $\bar{X}$ and $\bar{u},\begin{bmatrix}{X}_{\rm\scriptscriptstyle ref}\\{u}_{\rm\scriptscriptstyle ref}\end{bmatrix}=\begin{bmatrix}N\\ \hat{\mu}^{-1}\end{bmatrix}{z}_{\rm\scriptscriptstyle ref}$.
Finally, the alternative input sequence is given by
$\bar{u}(k+p)={u}_{\rm\scriptscriptstyle ref} + K(\bar{X}(k+p) - {X}_{\rm\scriptscriptstyle ref})$. Note that, importantly, the latter alternative solution to~\eqref{eq:optprb} is feasible (i.e., also verifies \eqref{eq:constraint_opt}) if $(1-\lambda)$ is sufficiently small (with $\lambda\neq 1$). The corresponding cost function $J_2$ reads:
\begin{equation}
\begin{array}{ll}
J_2&=\sum_{p=0}^{\overline{p}}\|\begin{bmatrix}C_p&D_p\end{bmatrix}\begin{bmatrix}\bar{X}(k) - {X}_{\rm\scriptscriptstyle ref}\\ \bar{U}(k)-\mathbf{1}_{\bar{p}+1,1}{u}_{\rm\scriptscriptstyle ref}\end{bmatrix}\|^2_{Q_p}
+\|\bar{u}(k+p) - u_{\rm\scriptscriptstyle ref}\|^2_{R_p}\\
&+\|\bar{X}(k+\overline{p}+1) - {X}_{\rm\scriptscriptstyle ref}\|^2_P + \sigma\| z_{\rm\scriptscriptstyle ref} - \textit{z}_{\rm\scriptscriptstyle goal} \|^2\\
&=\sum_{p=0}^{\overline{p}}\|\begin{bmatrix}C_p&D_p\end{bmatrix}G_{xu}(z_{\rm\scriptscriptstyle ref}^{\infty}-{z}_{\rm\scriptscriptstyle ref})\|^2_{Q_p}\\
&+\|\left[ K (A+B_1K)^p \quad 0_{1,\bar{p}+1}\right]G_{xu}(z_{\rm\scriptscriptstyle ref}^{\infty}-{z}_{\rm\scriptscriptstyle ref})\|^2_{R_p}\\
&+\| \left[  (A+B_1K)^{\bar{p}+1} \quad 0_{1,\bar{p}+1} \right] G_{xu}(z_{\rm\scriptscriptstyle ref}^{\infty}-{z}_{\rm\scriptscriptstyle ref})\|^2_P + \sigma\| z_{\rm\scriptscriptstyle ref} - z_{\rm\scriptscriptstyle goal} \|^2\\
&=\|z_{\rm\scriptscriptstyle ref}^{\infty}-{z}_{\rm\scriptscriptstyle ref}\|_{\tilde{P}}^2 + \sigma \|z_{\rm\scriptscriptstyle ref} - z_{\rm\scriptscriptstyle goal}\|^2
\end{array}
\end{equation}
where
$$G_{xu}=\begin{bmatrix}I_{2o-1}&0_{2o-1,1}\\0_{\bar{p}+1,2o-1}&\mathbf{1}_{\bar{p}+1}\end{bmatrix}\begin{bmatrix}N\\ \hat{\mu}^{-1}\end{bmatrix}$$
and where $\tilde{P}=G_{xu}^T\Lambda^{T} $diag$(Q_0,\dots,Q_{\bar{p}},R_0,\dots ,R_p,P) \Lambda G_{xu}$ with
\begin{equation}
\Lambda=\begin{bmatrix}
C_0 & D_0\\
\vdots & \vdots\\
C_{\bar{p}} & D_{\bar{p}}\\
K(A+B_1K)^{0} & 0_{1,\bar{p}+1}\\
\vdots \\
K(A+B_1K)^{\bar{p}}&0_{1,\bar{p}+1}\\
(A+B_1K)^{\bar{p}+1}&0_{2o-1,\bar{p}+1}
\end{bmatrix}
\end{equation}
From \eqref{eq:ytilde_ref}, $z_{\rm\scriptscriptstyle ref}^{\infty}-{z}_{\rm\scriptscriptstyle ref}=(1-\lambda)(z_{\rm\scriptscriptstyle ref}^{\infty}-{z}_{\rm\scriptscriptstyle goal}^{{\rm\scriptscriptstyle FEASIBLE}})$ and ${z}_{\rm\scriptscriptstyle ref}-z_{\rm\scriptscriptstyle goal}=\lambda (z_{\rm\scriptscriptstyle ref}^{\infty} -z_{\rm\scriptscriptstyle goal}^{{\rm\scriptscriptstyle FEASIBLE}})+z_{\rm\scriptscriptstyle goal}^{{\rm\scriptscriptstyle FEASIBLE}}-z_{\rm\scriptscriptstyle goal}$. If $\sigma$ is sufficiently large, i.e. if $\sigma > \lambda_{max}(\tilde{P})$ we compute that
$$\begin{array}{lcl}
J_1/\sigma & = & \|z_{\rm\scriptscriptstyle ref}^{\infty}-{z}_{\rm\scriptscriptstyle ref}^{{\rm\scriptscriptstyle FEASIBLE}}\|^2+\|{z}_{\rm\scriptscriptstyle ref}^{{\rm\scriptscriptstyle FEASIBLE}}-{z}_{\rm\scriptscriptstyle ref}\|^2\\
&&+2({z}_{\rm\scriptscriptstyle goal}^{{\rm\scriptscriptstyle FEASIBLE}}-{z}_{\rm\scriptscriptstyle goal})^T(z_{\rm\scriptscriptstyle ref}^{\infty}-{z}_{\rm\scriptscriptstyle goal}^{{\rm\scriptscriptstyle FEASIBLE}})\\
J_2/\sigma &\leq&  \|z_{\rm\scriptscriptstyle ref}^{\infty}-{z}_{\rm\scriptscriptstyle ref}\|^2 + \|{z}_{\rm\scriptscriptstyle ref}-z_{\rm\scriptscriptstyle goal}\|^2\\
&\leq&\left((1-\lambda)^2+\lambda^2\right)\|z_{\rm\scriptscriptstyle ref}^{\infty}-{z}_{\rm\scriptscriptstyle goal}^{{\rm\scriptscriptstyle FEASIBLE}}\|^2\\
&&+\|{z}_{\rm\scriptscriptstyle ref}^{{\rm\scriptscriptstyle FEASIBLE}}-{z}_{\rm\scriptscriptstyle goal}\|^2\\&&+2\lambda({z}_{\rm\scriptscriptstyle goal}^{{\rm\scriptscriptstyle FEASIBLE}}-{z}_{\rm\scriptscriptstyle goal})^T(z_{\rm\scriptscriptstyle ref}^{\infty}-{z}_{\rm\scriptscriptstyle goal}^{{\rm\scriptscriptstyle FEASIBLE}})
\end{array}$$
By subtraction we get
\begin{equation}
\begin{split}
J_1 - J_2 &\geq (1-(1-\lambda)^2 - \lambda^2)\sigma \|z_{\rm\scriptscriptstyle ref}^{\infty}-{z}_{\rm\scriptscriptstyle goal}^{{\rm\scriptscriptstyle FEASIBLE}}\|^2\\
&+2(1-\lambda)\sigma({z}_{\rm\scriptscriptstyle goal}^{{\rm\scriptscriptstyle FEASIBLE}}-{z}_{\rm\scriptscriptstyle goal})^T(z_{\rm\scriptscriptstyle ref}^{\infty}-{z}_{\rm\scriptscriptstyle goal}^{{\rm\scriptscriptstyle FEASIBLE}})
\end{split}
\end{equation}
Since  $2 ({z}_{\rm\scriptscriptstyle goal}^{{\rm\scriptscriptstyle FEASIBLE}}-{z}_{\rm\scriptscriptstyle goal})^T(z_{\rm\scriptscriptstyle ref}^{\infty}-{z}_{\rm\scriptscriptstyle goal}^{{\rm\scriptscriptstyle FEASIBLE}}) \geq 0$ by optimality, we obtain that $J_1 > J_2$. This shows that the second (non-steady state) solution, associated to the cost $J_2$, is more convenient than the one associated to $J_1$, and so $z_{\rm\scriptscriptstyle ref}^{\infty}= z_{\rm\scriptscriptstyle goal}^{{\rm\scriptscriptstyle FEASIBLE}}$ is the only possible steady-state solution where the output $\bar{z}_0(k|k)$ converges, contradicting the assumption. This entails that, as $k\rightarrow \infty$, $\bar{z}(k|k) \to z_{\rm\scriptscriptstyle goal}^{{\rm\scriptscriptstyle FEASIBLE}}$.\\
Finally, in view of \eqref{eq:constrZ} we obtain that  $\text{dist}(z(k),z_{\rm\scriptscriptstyle goal}^{{\rm\scriptscriptstyle FEASIBLE}}\oplus C(\bar{\mathbb{E}}\oplus\hat{\mathbb{E}})) \to 0$ as $k \to \infty$.


\end{document}